\documentclass[10pt,reqno]{amsart}
%
%
%

\usepackage{amsmath}
\usepackage{bbm}
\usepackage{amssymb}
\usepackage[utf8]{inputenc}
\newcommand*{\cA}{\mathcal{A}}
\newcommand*{\cB}{\mathcal{B}}
\newcommand*{\cC}{\mathcal{C}}
\newcommand*{\cH}{\mathcal{H}}
\newcommand*{\C}{\mathbb{C}}

\newcommand*{\N}{\mathbb{N}}
\newcommand*{\1}{\mathbf{1}}
\newcommand*{\R}{\mathbb{R}}
\newcommand*{\E}{\mathbb{E}}
\newcommand*{\halb}{\frac{1}{2}}
\newcommand{\abs}[1]{\left|#1\right|}
\newcommand{\norm}[1]{\left\|#1\right\|}


\renewcommand{\and}{\text{ and }}

\newcommand{\bmf}{\mathbf{f}}

\newtheorem{theorem}{Theorem}[section]
\newtheorem{lemma}[theorem]{Lemma}

\newtheorem{proposition}[theorem]{Proposition}
\newtheorem{corollary}[theorem]{Corollary}
\theoremstyle{definition}
\newtheorem{definition}[theorem]{Definition}
\newtheorem{example}[theorem]{Example}
\theoremstyle{remark}
\newtheorem{remark}[theorem]{Remark}
\numberwithin{equation}{section}
\allowdisplaybreaks

\begin{document}

%
%
%
%
%
%
%
%
%

\title[Generalized scaling operators]
 {Generalized Scaling Operators in White Noise Analysis and Applications to Ha\-mil\-tonian Path Integrals with Quadratic Action}

\author[Wolfgang Bock]{Wolfgang Bock}

\address{%
CMAF\\
Avenida Prof. Gama Pinto 2\\
1649-009\\
Portugal}

\email{bock@campus.ul.pt}

\thanks{This work is upported by the FCT-project: PTDC/MAT-STA/1284/2012}

\subjclass{Primary 99Z99; Secondary 00A00}

\keywords{Hamiltonian Path Integrals, White Noise Analysis, Scaling Operators}

\date{January 6, 2014}
\dedicatory{To the occasion of the 75th birthday of Ludwig Streit}

\begin{abstract}
We give an outlook, how to realize the ideas of complex scaling from \cite{GSV10}, \cite{GSV08}, \cite{GV08} to phase space path integrals in the framework of White Noise Analysis. The idea of this scaling method goes back to \cite{D80}. Therefore we extend the concept complex scaling to scaling with suitable bounded operators.
\end{abstract}

\maketitle


\section{Introduction}
As an alternative approach to quantum mechanics Feynman introduced the concept of path integrals (\cite{F48,Fe51,FeHi65}), which was developed into an extremely useful tool in many branches of theoretical physics. The phase space Feynman integral, or Hamiltonian path integral, for a particle moving from $y_0$ at time $0$ to $y$ at time $t$ under the potential $V$ is given by 
\begin{multline}\label{psfey} 
{\rm N} \int_{x(0)=y_0, x(t)=y} \int \exp\left(\frac{i}{\hbar} \int_0^t p\dot{x}-\frac{p^2}{2} -V(x,p) \, d\tau \right) \prod_{0<\tau<t}  dp(\tau) dx(\tau),\\
\hbar = \frac{h}{2\pi}.
\end{multline}
Here $h$ is Planck's constant, and the integral is thought of being over all position paths with $x(0)=y_0$ and $x(t)=y$ and all momentum paths. The missing restriction on the momentum variable at time $0$ and time $t$ is an immediate consequence of the Heisenberg uncertainty relation, i.e.~ the fact that one can not measure momentum and space variable at the same time. The path integral to the phase space has several advantages. Firstly the semi-classical approximation can be validated easier in a phase space formulation and secondly that quantum mechanics are founded on the phase space, i.e.~ every quantum mechanical observable can be expressed as a function of the space and momentum. A discussion about phase space path integrals can be found in the monograph \cite{AHKM08} and in the references therein.\\
There are many attempts to give a meaning to the Hamiltonian path integral as a mathematical rigorous object. Among these are analytic continuation of probabilistic integrals via coherent states \cite{KD82, KD84} and infinite dimensional distributions e.g.~\cite{DMN77}. Most recently also an approach using time-slicing was developed by Naoto Kumano-Go \cite{Ku11} and also by Albeverio et al.~using Fresnel integrals \cite{AHKM08, AGM02}. As a guide to the literature on many attempts to formulate these ideas we point out the list in \cite{AHKM08}. Here we choose a white noise approach.
White noise analysis is a mathematical framework which offers generalizations of concepts from finite-dimensional analysis, like differential operators and Fourier transform to an infinite-dimensional setting. We give a brief introduction to White Noise Analysis in Section 2, for more details see \cite{Hid80,BK95,HKPS93,Ob94,Kuo96}. Of special importance in White Noise Analysis are spaces of generalized functions and their characterizations. In this article we choose the space of Hida distributions, see Section 2.\\
The idea of realizing Feynman integrals within the white noise framework goes back to \cite{HS83}. There the authors used exponentials of quadratic (generalized) functions in order to give meaning to the Feynman integral in configuration space representation
\begin{equation*}
{\rm N}\int_{x(0) =y_0, x(t)=y} \exp\left(\frac{i}{\hbar} S(x) \right) \, \prod_{0<\tau<t} \, dx(\tau) ,\quad \hbar = \frac{h}{2\pi},
\end{equation*}
with the classical action $S(x)= \int_0^t \frac{1}{2} m \dot{x}^2 -V(x)\, d\tau$.
In \cite{BG11}, \cite{B13} and \cite{BG13} concepts of quadratic actions in White Noise Analysis, see \cite{GS98a} were used to give a rigorous meaning to the Feynman integrand 
\begin{align}\label{integrandpot}
I_V = {\rm Nexp}\left( \frac{i}{\hbar}\int_0^t  p(\tau) \dot{x}(\tau) -\frac{p(\tau)^2}{2m} d\tau +\frac{1}{2}\int_0^t \dot{x}(\tau)^2 +p(\tau)^2 d\tau\right)\\ \nonumber
\cdot \exp\left(-\frac{i}{\hbar} \int_0^t V(x(\tau),p(\tau),\tau) \, d\tau\right) \cdot \delta(x(t)-y)
\end{align}
as a Hida distribution. In this expression the sum of the first and the third integral in the exponential is the action $S(x,p)$, and the (Donsker's) delta function serves to pin trajectories to $y$ at time $t$. The second integral is introduced to simulate the Lebesgue integral by a local compensation of the fall-off of the Gaussian reference measure $\mu$.
Furthermore a Brownian motion starting in $y_0$ is used to model the space variable when the momentum variable is modeled by white noise, i.e.~
\begin{eqnarray}\label{varchoice}
x(\tau)=y(0)+\sqrt{\frac{\hbar}{m}}B(\tau),\quad 
p(\tau) =\omega(\tau),\quad 0\leq \tau \leq t. 
\end{eqnarray}
For the integrand we have thus the following ansatz 
\begin{multline*}
I_V={\rm Nexp}\big(-\frac{1}{2} \langle (\omega_x,\omega_p), K (\omega_x,\omega_p) \rangle \big) \cdot \exp\big(-\frac{1}{2} \langle (\omega_x,\omega_p), L (\omega_x,\omega_p) \rangle\big)\\
 \cdot \delta\big(\langle (\omega_x,\omega_p), (\1_{[0,t)},0) \rangle -y\big),
\end{multline*}
where $K$ is given by
\begin{equation}\label{kinmat}
K=\left(
\begin{array}[h]{l l}
    -\1_{[0,t)}&-i \1_{[0,t)}\\[0,1 cm]
    -i \1_{[0,t)}& -(1-i) \1_{[0,t)}
\end{array}
\right).
\end{equation}
Here the operator $\1_{[0,t)}$ denotes the multiplication with $\1_{[0,t)}$. And the operator $L$, fulfilling tha assumptions of Lemma \ref{thelemma} is used to model the quadratic potential.
For sake of simplicity we consider in this article path integrals with one degree of freedom, i.e.~the underlying space is the space $S'_{2}(\R)$.\\

In the euclidean configuration space a solution to the heat equation is given by the Feynman-Kac formula with its corresponding heat kernel. In White Noise Analysis one constructs the integral kernel by inserting Donsker's delta function to pin the final point $x\in \R$ and taking the expectation, i.e.,
$$
K_V(x,t,x_0,t_0)=\E\left(\exp(\int_{t_0}^t V(x_0+ \langle \1_{[t_0,r)},\cdot \rangle)\, dr) \delta(x_0+\langle \1_{[t_0,t)},\cdot \rangle-x)\right),$$
where the integrand is a suitable distribution in White Noise Analysis (e.g.~a Hida distribution).\\
A complementary strategy to construct Feynman intergals in the configuration space with White Noise methods was insprired by \cite{D80}, see also \cite{W95} and \cite{V10, GSV10} . Here for siutable potentials $V$ a complex-scaled Feynman-Kac kernel can be rigorously justified by giving a meaning to
\begin{eqnarray}\label{thecomplex}
K(x,t,x_0,t_0)=\E\left(\exp\left(\frac{1}{z^2} \int_{t_0}^t V(x+zB_s)\, ds\right) \sigma_z \delta(B_t -(x-x_0))\right).
\end{eqnarray}
In the configuration space, this has been done in \cite{W95} and \cite{V10, GSV10}. Note that if $z=\sqrt{i}$ in \eqref{thecomplex}, we have the Schrödinger kernel.\\
This scaling approach has several advantages e.g.
\begin{itemize}
\item Treatable potentials are beyond perturbation theory such as 
$$
V(x) = (-1)^{n+1} a_{4n+2} x^{4n+2} + \sum_{j=1}^{4n+1} a_j x^j,\quad x \in \R, n \in \N,$$ 
$\text{ with } a_{4n+2}>0, a_j \in \C.$

\item Due to a Wick formula we have a convenient structure (i.e.~"Brownian motion is replaced by a  Brownian bridge")
\item The kinetic energy $"\sigma_z \delta"$ and the potential can be treated separately, for details see e.g. \cite{LLSW94}.\\
\end{itemize}
We give an idea how to implement this approach to phase space for quadratic potentials. Thisis a first step to a scaling approach to the above potential class also for Hamiltonian path integrands.

\section{Preliminaries}

\subsection{Gel'fand Triples}
Starting point is the Gel'fand triple $S_d(\R) \subset L^2_d(\R) \subset S'_d(\R)$ of the $\R^d$-valued, $d \in \N$, Schwartz test functions and tempered distributions with the Hilbert space of (equivalence classes of) $\R^d$-valued square integrable functions w.r.t.~the Lebesgue measure as central space (equipped with its canonical inner product $(\cdot, \cdot)$ and norm $\|\cdot\|$), see e.g.~ \cite[Exam.~11]{W95}.
Since $S_d(\R)$ is a nuclear space, represented as projective limit of a decreasing chain of Hilbert spaces $(H_p)_{p\in \N}$, see e.g.~\cite[Chap.~2]{RS75a} and \cite{GV68}, i.e.~
\begin{equation*}
S_d(\R) = \bigcap_{p \in \N} H_p,
\end{equation*}
we have that $S_d(\R)$ is a countably Hilbert space in the sense of Gel'fand and Vilenkin \cite{GV68}. We denote the inner product and the corresponding norm on $H_p$ by $(\cdot,\cdot)_p$ and $\|\cdot\|_p$, respectively, with the convention $H_0 = L^2_d(\R)$.
Let $H_{-p}$ be the dual space of $H_p$ and let $\langle \cdot , \cdot \rangle$ denote the dual pairing on $H_{p} \times H_{-p}$. $H_{p}$ is continuously embedded into $L^2_d(\R)$. By identifying $L_d^2(\R)$ with its dual $L_d^2(\R)'$, via the Riesz isomorphism, we obtain the chain $H_p \subset L_d^2(\R) \subset H_{-p}$.
Note that $\displaystyle S'_d(\R)= \bigcup_{p\in \N} H_{-p}$, i.e.~$S'_d(\R)$ is the inductive limit of the increasing chain of Hilbert spaces $(H_{-p})_{p\in \N}$, see  e.g.~\cite{GV68}.
We denote the dual pairing of $S_d(\R)$ and $S'_d(\R)$ also by $\langle \cdot , \cdot \rangle$. Note that its restriction on $S_d(\R) \times L_d^2(\R)$ is given by $(\cdot, \cdot )$.
We also use the complexifications of these spaces denoted with the sub-index $\C$ (as well as their inner products and norms). The dual pairing we extend in a bilinear way. Hence we have the relation 
\begin{equation*}
\langle g,f \rangle = (\mathbf{g},\overline{\mathbf{f}}), \quad \mathbf{f},\mathbf{g} \in L_d^2(\R)_{\C},
\end{equation*}
where the overline denotes the complex conjugation.
\subsection{White Noise Spaces}
We consider on $S_d' (\R)$ the $\sigma$-algebra $\cC_{\sigma}(S_d' (\R))$ generated by the cylinder sets $\{ \omega \in S_d' (\R) | \langle \xi_1, \omega \rangle \in F_1, \dots ,\langle \xi_n, \omega \rangle \in F_n\} $, $\xi_i \in S_d(\R)$, $ F_i \in \cB(\R),\, 1\leq i \leq n,\, n\in \N$, where $\cB(\R)$ denotes the Borel $\sigma$-algebra on $\R$.\\
\noindent The canonical Gaussian measure $\mu$ on $C_{\sigma}(S_d'(\R))$ is given via its characteristic function
\begin{eqnarray*}
\int_{S_d' (\R)} \exp(i \langle {\bf f}, \boldsymbol{\omega} \rangle ) d\mu(\boldsymbol{\omega}) = \exp(- \tfrac{1}{2} \| {\bf f}\|^2 ), \;\;\; {\bf f} \in S_d(\R),
\end{eqnarray*}
\noindent by the theorem of Bochner and Minlos, see e.g.~\cite{Mi63}, \cite[Chap.~2 Theo.~1.~11]{BK95}. The space $(S_d'(\R),\cC_{\sigma}(S_d'(\R)), \mu)$ is the ba\-sic probability space in our setup.
The cen\-tral Gaussian spa\-ces in our frame\-work are the Hil\-bert spaces $(L^2):= L^2(S_d'(\R),$ $\cC_{\sigma}(S_d' (\R)),\mu)$ of complex-valued square in\-te\-grable func\-tions w.r.t.~the Gaussian measure $\mu$.\\
Within this formalism a representation of a d-dimensional Brownian motion is given by 
\begin{equation}\label{BrownianMotion}
{\bf B}_t ({\boldsymbol \omega}) :=(B_t(\omega_1), \dots, B_t(\omega_d)):= ( \langle  \1_{[0,t)},\omega_1 \rangle, \dots  \langle  \1_{[0,t)},\omega_d \rangle),\end{equation}
with ${\boldsymbol \omega}=(\omega_1,\dots, \omega_d) \in S'_d(\R),\quad t \geq 0,$
in the sense of an $(L^2)$-limit. Here $\1_A$ denotes the indicator function of a set $A$. 

\subsection{The Hida triple}

Let us now consider the Hilbert space $(L^2)$ and the corresponding Gel'fand triple
\begin{equation*}
(S) \subset (L^2) \subset (S)'.
\end{equation*}
Here $(S)$ denotes the space of Hida test functions and $(S)'$ the space of Hida distributions. In the following we denote the dual pairing between elements of $(S)$ and $(S)'$ by $\langle \! \langle \cdot , \cdot \rangle \!\rangle$. 
Instead of reproducing the construction of $(S)'$ here we give its characterization in terms of the $T$-transform.\\
\begin{definition}
We define the $T$-transform of $\Phi \in (S)'$ by
\begin{equation*}
T\Phi({\bf f}) := \langle\!\langle  \exp(i \langle {\bf f}, \cdot \rangle),\Phi \rangle\!\rangle, \quad  {\bf f}:= ({ f_1}, \dots ,{ f_d }) \in S_{d}(\R).
\end{equation*}
\end{definition}

\begin{remark}
\begin{itemize}
\item[(i)] Since $\exp(i \langle {\bf f},\cdot \rangle) \in (S)$ for all ${\bf f} \in S_d(\R)$, the $T$-transform of a Hida distribution is well-defined.
\item[(ii)] For ${\bf f} = 0$ the above expression yields $\langle\!\langle \Phi, 1 \rangle\!\rangle$, therefore $T\Phi(0)$ is called the generalized expectation of $\Phi \in (S)'$
\item[(iii)] Another important examples of Hida test functions are the so-called coherent states or Wick exponentials
$$
:\!\exp(\langle {\bf f}, \cdot \rangle)\!: \,= \exp(-\frac{1}{2} \langle  {\bf f},  {\bf f} \rangle ) \cdot \exp(\langle  {\bf f}, \cdot \rangle) ,\quad {\bf f} \in S_d(\R).
$$
\end{itemize}
\end{remark}

\noindent In order to characterize the space $(S)'$ by the $T$-transform we need the following definition.

\begin{definition}
A mapping $F:S_{d}(\R) \to \C$ is called a {\emph U-functional} if it satisfies the following conditions:
\begin{itemize}
\item[U1.] For all ${\bf{f, g}} \in S_{d}(\R)$ the mapping $\R \ni \lambda \mapsto F(\lambda {\bf f} +{\bf g} ) \in \C$ has an analytic continuation to $\lambda \in \C$ ({\bf{ray analyticity}}).
\item[U2.] There exist constants $0<C,D<\infty$ and a $p \in \N_0$ such that 
\begin{equation*}
|F(z{\bf f})|\leq C\exp(D|z|^2 \|{\bf f} \|_p^2), 
\end{equation*}
for all $z \in \C$ and ${\bf f} \in S_{d}(\R)$ ({\bf{growth condition}}).
\end{itemize}
\end{definition}

\noindent This is the basis of the following characterization theorem. For the proof we refer to \cite{PS91,Kon80,HKPS93,KLPSW96}.

\begin{theorem}\label{charthm}
A mapping $F:S_{d}(\R) \to \C$ is the $T$-transform of an element in $(S)'$ if and only if it is a U-functional.
\end{theorem}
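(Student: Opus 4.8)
The plan is to establish the two implications separately; the routine one is necessity, and the substance lies in sufficiency. For \textbf{necessity}, fix $\Phi\in(S)'$ and $\mathbf{f},\mathbf{g}\in S_d(\R)$. I would first note that $\C\ni\lambda\mapsto\exp(i\langle\lambda\mathbf{f}+\mathbf{g},\cdot\rangle)\in(S)$ is entire as an $(S)$-valued map, by expanding the exponential into its power series and checking that the partial sums converge in every Hilbert norm $\norm{\cdot}_p$ defining $(S)$; composing with the continuous functional $\Phi$ then gives U1. For U2 one uses that any $\Phi\in(S)'$ is bounded with respect to a \emph{single} such norm, $\abs{\langle\!\langle\varphi,\Phi\rangle\!\rangle}\le C\norm{\varphi}_p$, together with the elementary bound $\norm{\exp(i\langle z\mathbf{f},\cdot\rangle)}_p\le\exp\!\big(D\abs{z}^2\norm{\mathbf{f}}_p^2\big)$, which follows from the second-quantized (``exponential'') structure of the norms on $(S)$ --- concretely from $\|{:}\exp(\langle\mathbf{g},\cdot\rangle){:}\|_p^2=\exp(\norm{\mathbf{g}}_p^2)$ applied to $\mathbf{g}=iz\mathbf{f}$.

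For \textbf{sufficiency} let $F$ be a U-functional. Multiplying $F$ by the Gaussian $\mathbf{f}\mapsto\exp(\tfrac12\langle\mathbf{f},\mathbf{f}\rangle)$ --- itself a U-functional, and the class of U-functionals being closed under products --- I would reduce to a functional $G$ whose Taylor coefficients at $0$ will be, up to powers of $i$, exactly the candidate chaos kernels. By U1, $G$ is G\^ateaux-entire on the complexification $S_d(\R)_{\C}$, and by U2 it is locally bounded on some $H_{p,\C}$, hence Fr\'echet-entire there; thus $G(\mathbf{f})=\sum_{n\ge0}\tfrac{1}{n!}G_n(\mathbf{f},\dots,\mathbf{f})$ with symmetric $n$-linear $G_n=d^nG(0)$, convergent for every $\mathbf{f}$. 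Applying the Cauchy estimate in the scaling parameter to U2 and optimising the radius yields coefficient bounds of the form $\abs{G_n(\mathbf{f}^{\otimes n})}\le C\,n!\,(2eD/n)^{n/2}\norm{\mathbf{f}}_p^n$. Since $S_d(\R)$ is nuclear, the kernel (nuclear) theorem extends each $G_n$ to a continuous $n$-linear form on $H_{p,\C}^{\widehat{\otimes} n}$ and produces a unique symmetric kernel $\Phi_n\in H_{-p,\C}^{\widehat{\otimes} n}$ with $G_n(\mathbf{f}^{\otimes n})=\langle\Phi_n,\mathbf{f}^{\otimes n}\rangle$; polarization (constant $n^n/n!\le e^n$) then turns the diagonal bound into $\norm{\Phi_n}_{-p}\le C\,(2eD)^{n/2}\,n^{n/2}/n!$, which in fact \emph{decays} super-exponentially.

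The candidate distribution is then assembled from its Wiener--It\^o chaos, $\Phi:=\sum_{n\ge0}\langle{:}\cdot^{\otimes n}{:},\Phi_n\rangle$ (with the powers of $i$ reabsorbed). To see $\Phi\in(S)'$ I would verify $\sum_n n!\,\norm{\Phi_n}_{-q}^2<\infty$ for some $q$: the super-exponential decay just obtained comfortably dominates the weight $n!$ and leaves only a geometric growth rate, which is then killed by passing to a slightly coarser norm via a Hilbert--Schmidt embedding $H_{-p}\hookrightarrow H_{-q}$ of sufficiently small Hilbert--Schmidt norm. Finally, evaluating the $T$-transform on each Wick monomial $\langle{:}\cdot^{\otimes n}{:},\Phi_n\rangle$ and re-summing recovers $e^{-\frac12\langle\mathbf{f},\mathbf{f}\rangle}\sum_n i^n\langle\Phi_n,\mathbf{f}^{\otimes n}\rangle=F(\mathbf{f})$, so $T\Phi=F$; uniqueness is immediate from injectivity of $T$ on $(S)'$, which rests on the totality of $\{\exp(i\langle\mathbf{f},\cdot\rangle):\mathbf{f}\in S_d(\R)\}$ in $(S)$. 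I expect the genuine obstacle to be precisely this quantitative bookkeeping --- converting the \emph{single} growth inequality U2, via Cauchy estimates and the polarization constant, into kernel-norm bounds that are exactly tame enough to be summed after a Hilbert--Schmidt tensor embedding --- together with making the nuclear kernel theorem do the functional-analytic lifting cleanly.
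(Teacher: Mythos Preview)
The paper does not give its own proof of this theorem; immediately after the statement it simply writes ``For the proof we refer to \cite{PS91,Kon80,HKPS93,KLPSW96}.'' Your outline is precisely the classical Potthoff--Streit argument contained in those references --- necessity via the second-quantized norm of exponential vectors and continuity of $\Phi$ on a single $H_p$, sufficiency via Cauchy estimates on the ray-entire extension, the nuclear kernel theorem, polarization, and summation after a Hilbert--Schmidt step-down --- so there is no alternative approach in the paper to compare against, and your sketch is correct. The bookkeeping you flag as the likely obstacle is handled in the cited works exactly as you describe: the factor $n^{n/2}$ coming from polarization is absorbed by the $1/n!$ in the chaos norm via Stirling, leaving only a geometric factor that is then killed by choosing $q$ large enough to make the Hilbert--Schmidt norm of the embedding $H_{-p}\hookrightarrow H_{-q}$ sufficiently small.
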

Theorem \ref{charthm} enables us to discuss convergence of sequences of Hida distributions by considering the corresponding $T$-transforms, i.e.~ by considering convergence on the level of U-functionals. The following corollary is proved in \cite{PS91,HKPS93,KLPSW96}.

\begin{corollary}\label{seqcor}
Let $(\Phi_n)_{n\in \N}$ denote a sequence in $(S)'$ such that:
\begin{itemize}
\item[(i)] For all ${\bf f} \in S_{d}(\R)$, $((T\Phi_n)({\bf f}))_{n\in \N}$ is a Cauchy sequence in $\C$.
\item[(ii)] There exist constants $0<C,D<\infty$ such that for some $p \in \N_0$ one has 
\begin{equation*}
|(T\Phi_n)(z{\bf f })|\leq C\exp(D|z|^2\|{\bf f}\|_p^2)
\end{equation*}
for all ${\bf f} \in S_{d}(\R),\, z \in \C$, $n \in \N$.
\end{itemize}
Then $(\Phi_n)_{n\in \N}$ converges strongly in $(S)'$ to a unique Hida distribution.
\end{corollary}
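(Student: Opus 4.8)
The plan is to identify the pointwise limit of the $T$-transforms, recognise it as a $U$-functional via Theorem~\ref{charthm}, and then upgrade the resulting weak convergence to strong convergence by a Hilbert--Schmidt compactness argument inside the Hida chain.

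First I would use~(i) to define $F(\mathbf{f}) := \lim\limits_{n\to\infty}(T\Phi_n)(\mathbf{f})$ for every $\mathbf{f}\in S_d(\R)$. Letting $n\to\infty$ in~(ii) gives the growth bound U2 for $F$ with the same constants $C,D$ and index $p$. For ray analyticity U1, fix $\mathbf{f},\mathbf{g}\in S_d(\R)$ and consider the entire functions $\lambda\mapsto (T\Phi_n)(\lambda\mathbf{f}+\mathbf{g})$, which exist because each $T\Phi_n$ is a $U$-functional. The point is that~(ii), together with the elementary norm estimates for Wick exponentials, makes this family \emph{uniformly in $n$} bounded on every compact subset of $\C$; since it converges pointwise on $\R$ by~(i), Vitali's theorem yields locally uniform convergence on all of $\C$ to an entire function, which is the required analytic continuation of $\lambda\mapsto F(\lambda\mathbf{f}+\mathbf{g})$. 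Hence $F$ is a $U$-functional, and by Theorem~\ref{charthm} there is a \emph{unique} $\Phi\in(S)'$ with $T\Phi = F$; this already proves uniqueness of the limit.

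To obtain strong convergence I would use the quantitative version of Theorem~\ref{charthm}, i.e.\ the explicit norm inequalities in its proof (see \cite{PS91,HKPS93,KLPSW96}): applied to the $\Phi_n$, whose $T$-transforms satisfy U2 with the \emph{common} constants from~(ii), it yields a single index $q\in\N$, depending only on $C,D,p$, with $\Phi_n,\Phi\in(S)_{-q}$ and $M := \sup_n\norm{\Phi_n}_{-q} < \infty$. Now choose $q'>q$ large enough that the embedding $(S)_{-q}\hookrightarrow(S)_{-q'}$ is Hilbert--Schmidt, hence compact; then $\{\Phi_n\}_{n\in\N}$ is a bounded, and therefore relatively compact, subset of the Hilbert space $(S)_{-q'}$. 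Since the Wick exponentials are total in $(S)$, hence in $(S)_{q'}$, and $\sup_n\norm{\Phi_n}_{-q'} < \infty$, the pointwise convergence $T\Phi_n\to T\Phi$ forces $\Phi_n\rightharpoonup\Phi$ weakly in $(S)_{-q'}$. A weakly convergent sequence that is relatively compact in a Hilbert space converges in norm (pass to any subsequence, extract a norm-convergent sub-subsequence, and note its limit must equal the weak limit $\Phi$), so $\Phi_n\to\Phi$ in $(S)_{-q'}$. Finally this gives strong convergence in $(S)'$: for any bounded $B\subset(S)$ one has $B$ bounded in $(S)_{q'}$, so $\sup_{\varphi\in B}\abs{\langle\!\langle\varphi,\Phi_n-\Phi\rangle\!\rangle}\leq\norm{\Phi_n-\Phi}_{-q'}\sup_{\varphi\in B}\norm{\varphi}_{q'}\to 0$.

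The main obstacle is the quantitative step: extracting one common index $q$ and a uniform norm bound from the bare growth estimate~(ii). This is precisely where the soft characterization statement does not suffice and one needs the explicit bounds behind Theorem~\ref{charthm}, translating the exponential growth of the $T$-transforms into summable estimates on the chaos kernels of the $\Phi_n$. Everything afterwards is routine functional analysis.
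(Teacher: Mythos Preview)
The paper does not give its own proof of this corollary; it simply states that the result is proved in \cite{PS91,HKPS93,KLPSW96}. Your argument is precisely the standard one from those references: identify the pointwise limit $F$ of the $T$-transforms, verify U1 via Vitali and U2 by passing to the limit in~(ii), invoke the characterization theorem to produce $\Phi$, and then use the quantitative norm estimates behind Theorem~\ref{charthm} to land all $\Phi_n$ in a common $(S)_{-q}$ with a uniform bound, after which a Hilbert--Schmidt embedding upgrades weak to strong convergence. So your proposal is correct and matches the approach of the cited literature.

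One small remark on presentation: the phrase ``together with the elementary norm estimates for Wick exponentials'' is not quite the right justification for the local uniform bound on $\lambda\mapsto (T\Phi_n)(\lambda\mathbf{f}+\mathbf{g})$ for complex $\lambda$. What you actually need there is the standard fact (also in \cite{PS91,KLPSW96}) that a U-functional satisfying the growth bound U2 with constants $C,D,p$ extends to the complexification $S_d(\R)_{\C}$ with a bound of the same shape, $|F(\mathbf{f}+i\mathbf{g})|\le C'\exp(D'(\|\mathbf{f}\|_p^2+\|\mathbf{g}\|_p^2))$, obtained via the Cauchy integral formula from the one-variable analyticity. Once you cite or reprove that lemma, the Vitali step goes through cleanly.
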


\begin{example}[Vector valued white noise]
\noindent Let $\,{\bf{B}}(t)$, $t\geq 0$, be the $d$-di\-men\-sional Brow\-nian motion as in \eqref{BrownianMotion}. 
Consider $$\frac{{\bf{B}}(t+h,\boldsymbol{\omega}) - {\bf{B}}(t,\boldsymbol{\omega})}{h} = (\langle \frac{\1_{[t,t+h)}}{h} , \omega_1 \rangle , \dots (\langle \frac{\1_{[t,t+h)}}{h} , \omega_d \rangle),\quad h>0.$$ 
Then in the sense of Corollary \ref{seqcor} it exists
\begin{eqnarray*}
\langle {\boldsymbol\delta_t}, {\boldsymbol \omega} \rangle := (\langle \delta_t,\omega_1 \rangle, \dots ,\langle \delta_t,\omega_d \rangle):= \lim_{h\searrow 0} \frac{{\bf{B}}(t+h,\boldsymbol{\omega}) - {\bf{B}}(t,\boldsymbol{\omega})}{h}.
\end{eqnarray*}
Of course for the left derivative we get the same limit. Hence it is natural to call the generalized process $\langle {\boldsymbol\delta_t}, {\boldsymbol \omega} \rangle$, $t\geq0$ in $(S)'$ vector valued white noise. One also uses the notation ${\boldsymbol \omega}(t) =\langle{\boldsymbol\delta_t}, {\boldsymbol \omega} \rangle$, $t\geq 0$. 
\end{example}

Another useful corollary of Theorem \ref{charthm} concerns integration of a family of generalized functions, see \cite{PS91,HKPS93,KLPSW96}.

\begin{corollary}\label{intcor}
Let $(\Lambda, \cA, \nu)$ be a measure space and $\Lambda \ni\lambda \mapsto \Phi(\lambda) \in (S)'$ a mapping. We assume that its $T$--transform $T \Phi$ satisfies the following conditions:
\begin{enumerate}
\item[(i)] The mapping $\Lambda \ni \lambda \mapsto T(\Phi(\lambda))({\bf f})\in \C$ is measurable for all ${\bf f} \in S_d(\R)$.
\item[(ii)] There exists a $p \in \N_0$ and functions $D \in L^{\infty}(\Lambda, \nu)$ and $C \in L^1(\Lambda, \nu)$ such that 
\begin{equation*}
   \abs{T(\Phi(\lambda))(z{\bf f})} \leq C(\lambda)\exp(D(\lambda) \abs{z}^2 \norm{{\bf f}}^2), 
\end{equation*}
for a.e.~$ \lambda \in \Lambda$ and for all ${\bf f} \in S_d(\R)$, $z\in \C$.
\end{enumerate}
Then, in the sense of Bochner integration in $H_{-q} \subset (S)'$ for a suitable $q\in \N_0$, the integral of the family of Hida distributions is itself a Hida distribution, i.e.~$\!\displaystyle \int_{\Lambda} \Phi(\lambda) \, d\nu(\lambda) \in (S)'$ and the $T$--transform interchanges with integration, i.e.~
\begin{equation*}
   T\left( \int_{\Lambda} \Phi(\lambda)  d\nu(\lambda) \right)(\mathbf{f}) =
   	\int_{\Lambda} T(\Phi(\lambda))(\mathbf{f}) \, d\nu(\lambda), \quad \mathbf{f} \in S_d(\R).
\end{equation*}
\end{corollary}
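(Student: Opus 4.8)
The plan is to construct the integral at the level of $U$-functionals and then invoke the characterization theorem. First I would introduce the candidate functional $F\colon S_d(\R)\to\C$,
\begin{equation*}
F(\mathbf{f}) \defb \int_{\Lambda} T(\Phi(\lambda))(\mathbf{f}) \, d\nu(\lambda).
\end{equation*}
This is well defined: the integrand is measurable by assumption (i), and assumption (ii) with $z=1$ together with $D\in L^{\infty}(\Lambda,\nu)$ gives the pointwise bound $\abs{T(\Phi(\lambda))(\mathbf{f})}\leq C(\lambda)\exp(\norm{D}_{L^{\infty}}\norm{\mathbf{f}}_p^2)$, whose right-hand side lies in $L^1(\Lambda,\nu)$ since $C\in L^1(\Lambda,\nu)$ (here $p$ is the index furnished by (ii)).

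Next I would verify that $F$ is a $U$-functional. For the growth bound U2, applying assumption (ii) to $z\mathbf{f}$ yields
\begin{equation*}
\abs{F(z\mathbf{f})}\leq \int_{\Lambda} C(\lambda)\exp\big(D(\lambda)\abs{z}^2\norm{\mathbf{f}}_p^2\big)\,d\nu(\lambda)\leq \norm{C}_{L^1}\exp\big(\norm{D}_{L^{\infty}}\abs{z}^2\norm{\mathbf{f}}_p^2\big),
\end{equation*}
so U2 holds with constants $\norm{C}_{L^1},\norm{D}_{L^{\infty}}$ and index $p$. For the ray analyticity U1, fix $\mathbf{f},\mathbf{g}\in S_d(\R)$; for each $\lambda$ the map $z\mapsto T(\Phi(\lambda))(z\mathbf{f}+\mathbf{g})$ is entire because $T(\Phi(\lambda))$ is itself a $U$-functional. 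Estimating $\norm{z\mathbf{f}+\mathbf{g}}_p\leq\abs{z}\norm{\mathbf{f}}_p+\norm{\mathbf{g}}_p$ and using (ii) with $D\in L^{\infty}$, this family is dominated on every disc $\abs{z}\leq R$ by a fixed function in $L^1(\Lambda,\nu)$ independent of $z$; hence Morera's theorem together with Fubini's theorem shows that $z\mapsto F(z\mathbf{f}+\mathbf{g})$ is analytic on $\C$, which is U1. By Theorem \ref{charthm} there is a unique $\Psi\in(S)'$ with $T\Psi=F$; setting $\int_{\Lambda}\Phi(\lambda)\,d\nu(\lambda)\defb\Psi$ then yields the asserted interchange of $T$-transform and integration by construction.

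It remains to realize $\Psi$ as a Bochner integral in a fixed $H_{-q}$. For this I would use the quantitative content of the proof of Theorem \ref{charthm}: a growth estimate for $T(\Phi(\lambda))$ with constants $C(\lambda),D(\lambda)$ and index $p$ forces $\Phi(\lambda)\in H_{-q}$ with $\norm{\Phi(\lambda)}_{-q}\leq C(\lambda)\,\kappa_q(D(\lambda),p)$ for all sufficiently large $q$, where $\kappa_q(\cdot,p)$ is an explicit nondecreasing function. Since $D\in L^{\infty}(\Lambda,\nu)$, one $q$ works for $\nu$-a.e.\ $\lambda$ with $\norm{\Phi(\lambda)}_{-q}\leq C(\lambda)\,\kappa_q(\norm{D}_{L^{\infty}},p)$; because $C\in L^1(\Lambda,\nu)$ and $\lambda\mapsto\langle\!\langle\exp(i\langle\mathbf{f},\cdot\rangle),\Phi(\lambda)\rangle\!\rangle$ is measurable on the total set of exponentials (whence, by density and separability of $H_{-q}$, $\lambda\mapsto\Phi(\lambda)$ is strongly measurable), the map $\lambda\mapsto\Phi(\lambda)$ is Bochner integrable as an $H_{-q}$-valued map. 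Evaluation against a fixed element of $H_q$ — in particular against $\exp(i\langle\mathbf{f},\cdot\rangle)$ — is a continuous linear functional on $H_{-q}$ and therefore commutes with the Bochner integral, so the Bochner integral has $T$-transform $F$ and hence equals $\Psi$. I expect the main obstacle to be exactly this last step: extracting the quantitative $H_{-q}$-norm estimate from the characterization theorem and verifying that $q$ can be chosen uniformly in $\lambda$; the rest is routine dominated-convergence bookkeeping.
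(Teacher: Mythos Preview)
The paper does not supply its own proof of this corollary; it merely records the statement and cites \cite{PS91,HKPS93,KLPSW96} for the argument. Your sketch is precisely the standard route taken in those references: integrate the $T$-transforms to obtain a candidate $U$-functional, verify U1 and U2 via the dominated bounds from (ii), invoke the characterization theorem, and then upgrade to a genuine Bochner integral in some fixed $H_{-q}$ using the quantitative norm estimate hidden in the proof of Theorem~\ref{charthm}. So your proposal matches the approach the paper defers to.

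One small point worth tightening in your last paragraph: to get strong measurability of $\lambda\mapsto\Phi(\lambda)$ in $H_{-q}$ via Pettis, you need weak measurability against a total set in $H_q$, and the exponentials $\exp(i\langle\mathbf{f},\cdot\rangle)$ for $\mathbf{f}\in S_d(\R)$ are total in $(S)$ but you should check (or cite) that they remain total in each $H_q$ you use; alternatively one argues directly with the chaos kernels. This is handled in the cited references and is not a gap in the strategy, just a detail to nail down.
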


Based on the above theorem, we introduce the following Hida distribution.
\begin{definition}
\label{D:Donsker} 
We define Donsker's delta at $x \in \R$ corresponding to $0 \neq {\boldsymbol\eta} \in L_{d}^2(\R)$ by
\begin{equation*}
   \delta_0(\langle {\boldsymbol\eta},\cdot \rangle-x) := 
   	\frac{1}{2\pi} \int_{\R} \exp(i \lambda (\langle {\boldsymbol\eta},\cdot \rangle -x)) \, d \lambda
\end{equation*}
in the sense of Bochner integration, see e.g.~\cite{HKPS93,LLSW94,W95}. Its $T$--transform in ${\bf f} \in S_d(\R)$ is given by
\begin{multline*}
   T(\delta_0(\langle  {\boldsymbol\eta},\cdot \rangle-x)({\bf f}) \\
   	= \frac{1}{\sqrt{2\pi \langle {\boldsymbol\eta}, {\boldsymbol\eta}\rangle}} \exp\left( -\frac{1}{2\langle {\boldsymbol\eta},{\boldsymbol\eta} \rangle}(i\langle {\boldsymbol\eta},{\bf f} \rangle - x)^2 -\frac{1}{2}\langle {\bf f},{\bf f}\rangle \right), \, \, \mathbf{f} \in S_d(\R).
\end{multline*}
\end{definition}

\subsection{Generalized Gauss Kernels}
Here we review a special class of Hida distributions which are defined by their T-transform, see e.g.~\cite{GS98a}. Let $\mathcal{B}$ be the set of all continuous bilinear mappings $B:S_{d}(\R) \times S_{d}(\R) \to \mathbb{C}$. Then the functions
\begin{equation*}
S_d(\R)\ni f \mapsto \exp\left(-\frac{1}{2} B({\bf f},{\bf f})\right) \in \mathbb{C}
\end{equation*}
for all $B\in \mathcal{B}$ are U-functionals. Therefore, by using the characterization of Hida distributions in Theorem \ref{charthm},
the inverse T-transform of these functions 
\begin{equation*}
\Phi_B:=T^{-1} \exp\left(-\frac{1}{2} B\right)
\end{equation*}
are elements of $(S)'$.

\begin{definition}\label{GGK}
The set of {\bf{generalized Gauss kernels}} is defined by
\begin{equation*}
GGK:= \{ \Phi_B,\; B\in \mathcal{B} \}.
\end{equation*}
\end{definition}

\begin{example}{\cite{GS98a}} \label{Grotex} We consider a symmetric trace class operator ${K}$ on $L^2_{d}(\R)$ such that $-\frac{1}{2}<{K}\leq 0$, then
\begin{align*}
\int_{S'_{d}(\R)} \exp\left(- \langle \omega,{K} \omega\rangle \right) \, d\mu(\boldsymbol{\omega}) 
= \left( \det({Id +2K})\right)^{-\frac{1}{2}} < \infty.
\end{align*}
For the definition of $\langle \cdot,{K} \cdot \rangle$ see the remark below.
Here ${Id}$ denotes the identity operator on the Hilbert space $L^2_{d}(\R)$, and $\det({A})$ of a symmetric trace class operator ${A}$ on $L^2_{d}(\R)$ denotes the infinite product of its eigenvalues, if it exists. In the present situation we have $\det({Id +2K})\neq 0$.
There\-fore we obtain that the exponential $g= \exp(-\frac{1}{2} \langle \cdot,{K} \cdot \rangle)$ is square-integrable and its T-transform is given by 
\begin{equation*}
Tg({\bf f}) = \left( \det({Id+K}) \right)^{-\frac{1}{2}} \exp\left(-\frac{1}{2} ({\bf f}, {(Id+K)^{-1}} {\bf f})\right), \quad {\bf f} \in S_{d}(\R).
\end{equation*}
Therefore $\left( \det({Id+K}) \right)^{\frac{1}{2}}g$ is a generalized Gauss kernel.
\end{example}

\begin{definition}
Let $K: L^2_{d,\mathbb{C}}(\R, dx) \to L^2_{d,\mathbb{C}}(\R, dx)$ be linear and continuous such that
\begin{itemize}
\item[(i)] $Id+K$ is injective, 
\item[(ii)] there exists $p \in \N_0$ such that $(Id+K)(L^2_{d,\mathbb{C}}(\R,\,dx)) \subset H_{p,\mathbb{C}}$ is dense,
\item[(iii)] there exist $q \in\N_0$ such that $(Id+K)^{-1} :H_{p,\mathbb{C}} \to H_{-q,\mathbb{C}}$ is continuous with $p$ as in (ii).
\end{itemize}
Then we define the normalized exponential
\begin{equation}\label{Nexp}
{\rm{Nexp}}(- \frac{1}{2} \langle \cdot ,K \cdot \rangle)
\end{equation}
by
\begin{align*}
T({\rm{Nexp}}(- \frac{1}{2} \langle \cdot ,K \cdot \rangle))({\bf f}) &:= \exp(-\frac{1}{2} \langle {\bf f}, (Id+K)^{-1} {\bf f} \rangle),\quad {\bf f} \in S_d(\R).
\end{align*}
\end{definition}

\begin{remark}
The "normalization" of the exponential in the above definition can be regarded as a division of a divergent factor. In an informal way one can write
\begin{multline*}
T({\rm{Nexp}}(- \frac{1}{2} \langle \cdot ,K \cdot \rangle))({\bf f})=\frac{T(\exp(- \frac{1}{2} \langle \cdot ,K \cdot \rangle))(\bf f)}{T(\exp(- \frac{1}{2} \langle \cdot ,K \cdot \rangle))(0)}\\
=\frac{T(\exp(- \frac{1}{2} \langle \cdot ,K \cdot \rangle))(\bf f)}{\sqrt{\det(Id+K)}} , \quad {\bf f} \in S_d(\R), 
\end{multline*}
i.e.~we can still define the normalized exponential by the T-transform even if the determinant is not defined.
\end{remark}

\begin{lemma}{\cite{BG11}}\label{thelemma}
Let  ${L}$ be a $d\times d$ block operator matrix on $L^2_{d}(\R)_{\C}$ acting component-wise such that all entries are bounded operators on $L^2(\R)_{\C}$.
Let ${K}$ be a d $\times d$ block operator matrix on $L^2_{d}(\R)_{\C}$, such that ${Id+K}$ and ${N}={Id}+{K}+{L}$ are bounded with bounded inverse. Furthermore assume that $\det({Id}+{L}({Id}+{K})^{-1})$ exists and is different from zero (this is e.g.~the case if ${L}$ is trace class and -1 in the resolvent set of ${L}({Id}+{K})^{-1}$).
Let $M_{{N}^{-1}}$ be the matrix given by an orthogonal system $({\boldsymbol\eta}_k)_{k=1,\dots J}$ of non--zero functions from $L^2_d(\R)$, $J\in \N$, under the bilinear form $\left( \cdot ,{N}^{-1} \cdot \right)$, i.e.~ $(M_{{N}^{-1}})_{i,j} = \left( {\boldsymbol\eta}_i ,{N}^{-1} {\boldsymbol\eta}_j \right)$, $1\leq i,j \leq J$.
Under the assumption that either 
\begin{eqnarray*}
\Re(M_{\mathbf{N}^{-1}}) >0 \quad \text{ or }\quad \Re(M_{\mathbf{N}^{-1}})=0 \,\text{ and } \,\Im(M_{\mathbf{N}^{-1}}) \neq 0,
\end{eqnarray*} 
where $M_{\mathbf{N}^{-1}}=\Re(M_{\mathbf{N}^{-1}}) + i \Im(M_{\mathbf{N}^{-1}})$ with real matrices $\Re(M_{\mathbf{N}^{-1}})$ and $\Im(M_{\mathbf{N}^{-1}})$, \\
then
\begin{equation*}
\Phi_{{K},{L}}:={\rm Nexp}\big(-\frac{1}{2} \langle \cdot, {K} \cdot \rangle \big) \cdot \exp\big(-\frac{1}{2} \langle \cdot, {L} \cdot \rangle \big) \cdot \exp(i \langle \cdot, {\bf g} \rangle)
\cdot \prod_{i=1}^J \delta_0 (\langle \cdot, {\boldsymbol\eta}_k \rangle-y_k),
\end{equation*}
for ${\bf g} \in L^2_{d}(\R,\C),\, t>0,\, y_k \in \R,\, k =1\dots,J$, exists as a Hida distribution. \\
Moreover for ${\bf f} \in S_d(\R)$
\begin{multline}\label{magicformula}
T\Phi_{K,{L}}({\bf f})=\frac{1}{\sqrt{(2\pi)^J  \det((M_{{N}^{-1}}))}}
\sqrt{\frac{1}{\det({Id}+{L}({Id}+{K})^{-1})}}\\ 
\times \exp\bigg(-\frac{1}{2} \big(({\bf f}+{\bf g}), {N}^{-1} ({\bf f}+{\bf g})\big) \bigg)
\exp\bigg(-\frac{1}{2} (u,(M_{{N}^{-1}})^{-1} u)\bigg),
\end{multline}
where
\begin{equation*}
u= \left( \big(iy_1 +({\boldsymbol\eta}_1,{N}^{-1}({\bf f}+{\bf g})) \big), \dots, \big(iy_J +({\boldsymbol\eta}_J,{N}^{-1}({\bf f}+{\bf g})) \big) \right).
\end{equation*}
\end{lemma}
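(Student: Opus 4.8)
The plan is to establish the statement by exhibiting $\Phi_{K,L}$ as a limit of Hida distributions and applying the convergence corollary, Corollary \ref{seqcor}, together with the integral corollary, Corollary \ref{intcor}, to handle the Donsker delta factors. First I would replace each $\delta_0(\langle\cdot,\boldsymbol\eta_k\rangle-y_k)$ by its defining Bochner integral from Definition \ref{D:Donsker}, so that the candidate object becomes
\begin{equation*}
\frac{1}{(2\pi)^J}\int_{\R^J}{\rm Nexp}\big(-\tfrac12\langle\cdot,K\cdot\rangle\big)\cdot\exp\big(-\tfrac12\langle\cdot,L\cdot\rangle\big)\cdot\exp\big(i\langle\cdot,{\bf g}+\textstyle\sum_k\lambda_k\boldsymbol\eta_k\rangle\big)\,\exp\big(-i\textstyle\sum_k\lambda_k y_k\big)\,d\boldsymbol\lambda.
\end{equation*}
For fixed $\boldsymbol\lambda$ the integrand is a product of ${\rm Nexp}$, a Gauss kernel $\exp(-\tfrac12\langle\cdot,L\cdot\rangle)$, and a Wick-type exponential $\exp(i\langle\cdot,\mathbf h\rangle)$; its $T$-transform is a $U$-functional computable in closed form, essentially
\begin{equation*}
\frac{1}{\sqrt{\det(Id+L(Id+K)^{-1})}}\exp\Big(-\tfrac12\big(({\bf f}+\mathbf h),N^{-1}({\bf f}+\mathbf h)\big)\Big),
\end{equation*}
with $N=Id+K+L$ and $\mathbf h={\bf g}+\sum_k\lambda_k\boldsymbol\eta_k$; this uses the product formula for $T$-transforms of ${\rm Nexp}$ with Gauss kernels (cf.\ \cite{BG11,GS98a}) and the assumptions that $Id+K$ and $N$ are boundedly invertible and that the relevant determinant exists and is nonzero.

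The next step is to verify the hypotheses of Corollary \ref{intcor} for this $\R^J$-valued family. Measurability in $\boldsymbol\lambda$ is immediate since the $T$-transform depends polynomially and exponentially on $\boldsymbol\lambda$. For the integrability bound I would expand the quadratic form: writing $N^{-1}({\bf f}+\mathbf h)$ and collecting the $\boldsymbol\lambda$-dependence, the exponent contains the term $-\tfrac12(\boldsymbol\lambda,M_{N^{-1}}\boldsymbol\lambda)$ plus terms linear in $\boldsymbol\lambda$ (with coefficients linear in ${\bf f}$) plus terms independent of $\boldsymbol\lambda$. Completing the square in $\boldsymbol\lambda$ and using the dichotomy hypothesis — either $\Re(M_{N^{-1}})>0$, or $\Re(M_{N^{-1}})=0$ with $\Im(M_{N^{-1}})\neq0$ — gives absolute convergence of the Gaussian-type integral over $\R^J$ and produces precisely the prefactor $\big((2\pi)^J\det M_{N^{-1}}\big)^{-1/2}$; in the purely imaginary case one argues via a Fresnel-type oscillatory integral after a contour rotation, or by a limiting regularization $M_{N^{-1}}+\varepsilon Id$ and Corollary \ref{seqcor}. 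The resulting bound has the form $C({\bf f})\exp(D|z|^2\|{\bf f}\|_p^2)$ uniformly, with $D$ governed by $\|N^{-1}\|$ on a suitable $H_p$, which is exactly condition (ii) of Corollary \ref{intcor}.

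Carrying out the Gaussian integration over $\boldsymbol\lambda$ explicitly then yields formula \eqref{magicformula}, with the vector $u$ being the linear-in-${\bf f}$ part arising from the cross terms, shifted by the $iy_k$ coming from $\exp(-i\sum_k\lambda_k y_k)$. Finally, by Corollary \ref{intcor} the integral is a genuine Hida distribution whose $T$-transform is the integrated $T$-transform, so $\Phi_{K,L}\in(S)'$ and \eqref{magicformula} holds. The main obstacle I anticipate is the degenerate case $\Re(M_{N^{-1}})=0$: there the $\boldsymbol\lambda$-integral is only conditionally convergent as an oscillatory integral, so one cannot apply Corollary \ref{intcor} directly and must instead justify the computation by a regularization argument — replacing $M_{N^{-1}}$ by $M_{N^{-1}}+\varepsilon\,Id$ (or inserting a Gaussian cutoff $\exp(-\varepsilon|\boldsymbol\lambda|^2)$), checking the two conditions of Corollary \ref{seqcor} uniformly in $\varepsilon$, and passing to the limit $\varepsilon\downarrow0$ on the level of $U$-functionals; the nondegeneracy $\Im(M_{N^{-1}})\neq0$ is what keeps the limiting prefactor $\sqrt{\det M_{N^{-1}}}$ finite and nonzero. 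The remaining steps — the determinant identity $\det(Id+K)\det(Id+L(Id+K)^{-1})=\det N$ and the bookkeeping of the quadratic form — are routine linear algebra on the chain $H_p\subset L^2_{d,\C}\subset H_{-q}$.
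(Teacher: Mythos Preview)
The paper does not prove this lemma at all; it is stated as a citation of \cite{BG11} and no argument is given in the text. Your outline is exactly the standard proof that appears in that reference: represent the product of delta factors via the Fourier--Bochner integral of Definition~\ref{D:Donsker}, compute the $T$-transform of the integrand for fixed $\boldsymbol\lambda$ using the ${\rm Nexp}$/Gauss-kernel product formula to obtain the quadratic form governed by $N^{-1}$, complete the square in $\boldsymbol\lambda$ to extract $M_{N^{-1}}$, and invoke Corollary~\ref{intcor} (with a regularization and Corollary~\ref{seqcor} in the purely oscillatory case $\Re(M_{N^{-1}})=0$). There is nothing to compare against in the present paper, and your sketch is correct.
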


\subsection{Scaling Operator}
First note, that every test function $\varphi \in (S)$ can be extended to $S_d(\R)'_{\C}$, see e.g.~\cite{Kuo96}. Thus the following definition makes sense. 
\begin{definition}
Let $\varphi$ be the continuous version of an element of $(S)$. Then for $0\neq z \in \C$ we define the scaling operator $\sigma_z$ by
$$
(\sigma_z \varphi)(\omega) = \varphi(z \omega), \quad \omega \in S_d'(\R).$$ 
\end{definition} 

\begin{proposition}
\begin{enumerate}
\item[(i)] For all $0\neq z \in \C$ we have $\sigma_z \in L((S),(S))$,
\item[(ii)] for $\varphi, \psi\in (S)$ we have $$\sigma_z (\varphi \cdot \psi) = (\sigma_z \varphi) (\sigma_z \psi).$$
\end{enumerate}
\end{proposition}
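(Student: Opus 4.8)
The plan is to dispose of (ii) and of the linearity in (i) at once, and then to reduce the continuity assertion $\sigma_z\in L((S),(S))$ to a computation on the dense subspace spanned by the coherent states together with the characterization of the test space $(S)$. For (ii): $(S)$ is a topological algebra under pointwise multiplication, and for $\varphi,\psi\in(S)$ the continuous version of $\varphi\cdot\psi$ is, on $S_d(\R)'_{\C}$, the pointwise product of the continuous versions of $\varphi$ and of $\psi$; hence for $0\neq z\in\C$ and $\omega\in S_d'(\R)$ one gets $(\sigma_z(\varphi\psi))(\omega)=(\varphi\psi)(z\omega)=\varphi(z\omega)\,\psi(z\omega)=(\sigma_z\varphi)(\omega)\,(\sigma_z\psi)(\omega)$, which is (ii). Linearity of $\sigma_z$ is clear from the definition, so the whole content lies in the continuity statement.

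For the latter I would first compute $\sigma_z$ on the coherent states. From the definition and from $:\!\exp(\langle{\bf f},\cdot\rangle)\!:(\omega)=\exp(-\tfrac12\langle{\bf f},{\bf f}\rangle)\exp(\langle{\bf f},\omega\rangle)$ one gets
\[
\sigma_z\,:\!\exp(\langle{\bf f},\cdot\rangle)\!:\;=\;\exp\!\Big(\tfrac{z^{2}-1}{2}\langle{\bf f},{\bf f}\rangle\Big)\;:\!\exp(\langle z{\bf f},\cdot\rangle)\!:,\qquad{\bf f}\in S_d(\R),
\]
which is again a Hida test function depending entirely and in a controlled way on ${\bf f}$. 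Since finite linear combinations of coherent states are dense in $(S)$, this fixes $\sigma_z$ on a dense subspace and shows it takes values in $(S)$ there. On the level of chaos expansions the identity reads $\sigma_z=\Gamma(z\,Id)\circ\exp\!\big(\tfrac{z^{2}-1}{2}\Delta_G\big)$, where $\Gamma(z\,Id)$ is the second quantization of $z$ times the identity on the one-particle space (it multiplies the $n$-th chaos by $z^{n}$) and $\Delta_G$ is the Gross Laplacian; both operators, and hence $\exp(c\Delta_G)$ for every $c\in\C$, are known to be continuous on $(S)$.

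It then remains to extend from this dense subspace to all of $(S)$ and to make the bound quantitative. For arbitrary $\varphi\in(S)$ the function $\omega\mapsto\varphi(z\omega)$ is a priori a well-defined Hida distribution, since its chaos expansion is obtained from that of $\varphi$ via the scaling formula, which in each degree involves only finitely many trace contractions of the (rapidly decreasing) kernels of $\varphi$. Taking the $T$-transform of this expansion — equivalently, applying the factorization $\Gamma(z\,Id)\circ\exp(\tfrac{z^{2}-1}{2}\Delta_G)$ termwise — one checks that $T(\sigma_z\varphi)$ obeys the growth bounds characterizing $(S)$ (the analogue for the test space of Theorem~\ref{charthm}, see \cite{Kuo96,KLPSW96}), with constants controlled uniformly in terms of finitely many of the Hilbert norms of $\varphi$. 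This yields simultaneously $\sigma_z\varphi\in(S)$ and the continuity of $\sigma_z:(S)\to(S)$; alternatively, once $\sigma_z$ is known to map $(S)$ into $(S)$, continuity follows from the closed graph theorem, $(S)$ being a Fréchet space and $\sigma_z$ being closed (convergence in $(S)$ forces convergence of the continuous versions, hence of their scalings, and convergence in $(S)'$).

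The estimate in the previous paragraph is the only step I expect to require genuine work. The difficulty is that for $|z|>1$ the factor $\Gamma(z\,Id)$ amplifies the $n$-th chaos component of $\varphi$ by $|z|^{n}$, and the Gross--Laplacian exponential contributes further combinatorial weights from the contractions; all of this must be absorbed, and it is precisely the nuclearity of $(S)$ — the rapid decay built into the defining family of Hilbert norms — that provides the room to do so, at the price of passing to a larger index in the norm system. Everything else is routine bookkeeping with the chaos decomposition.
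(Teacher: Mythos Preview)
Your proposal is correct, and both parts take a somewhat different route from the paper's own proof.

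For (i), the paper gives no argument at all: it simply refers to \cite{W95}. Your sketch via the factorization $\sigma_z=\Gamma(z\,Id)\circ\exp\!\big(\tfrac{z^{2}-1}{2}\Delta_G\big)$ and the known continuity of second quantization and of the Gross Laplacian semigroup on $(S)$ is a standard and perfectly viable way to obtain the result; it is more informative than a bare citation, though of course it still defers the estimates to the literature on $\Gamma$ and $\Delta_G$ (which is fine). Your closed-graph alternative is also legitimate once one knows $\sigma_z$ maps $(S)$ to $(S)$.

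For (ii), the paper argues by density: it uses (i) to know $\sigma_z$ is continuous, then verifies the multiplicativity on Wick exponentials and passes to the closure. Your argument is more direct and more elementary: once one observes that the continuous version of $\varphi\psi$ on $S_d'(\R)_{\C}$ is the pointwise product of the continuous versions of $\varphi$ and $\psi$, the identity $\sigma_z(\varphi\psi)=(\sigma_z\varphi)(\sigma_z\psi)$ is an immediate pointwise substitution. This bypasses the density/continuity step entirely (one still needs (i) to know the right-hand side lies in $(S)$, but not to establish the equality itself). The paper's Wick-exponential computation is not wrong, merely unnecessary; your route is the cleaner one here.
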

\begin{proof}
$(i)$ is proved in \cite{W95}\\
For $(ii)$, first note that $(S)$ is an algebra under pointwise multiplication. Since the scaling operator is continuous from $(S)$ to itself by $(i)$, it suffices to show the assumption for the set of Wick ordered exponentials. Since this set is total the rest follows by a density argument. 
We have for $\xi, \eta \in S_d(\R)$,

\begin{multline*} 
\sigma_z (:\exp(\langle \xi, \cdot\rangle): \cdot :\exp(\langle \eta, \cdot\rangle):)=\\ \sigma_z( \exp\left(-\frac{1}{2} (\langle \xi, \xi \rangle + \langle \eta, \eta \rangle )\right) \exp(\langle \xi+\eta, \omega\rangle)\\
= \exp\left(-\frac{1}{2} (\langle \xi, \xi \rangle + \langle \eta, \eta \rangle )\right) \exp(\langle \xi +\eta, z \omega\rangle)\\
=  \exp\left(-\frac{1}{2} (\langle \xi, \xi \rangle + \langle \eta, \eta \rangle )\right)\exp(\langle z \xi, \omega \rangle) \exp(\langle z \eta, \omega \rangle )
\end{multline*}
on the other hand
\begin{multline*}
\sigma_z (:\exp(\langle \xi, \cdot):) \cdot \sigma_z(:\exp(\langle \eta, \cdot):)) =\\ \exp(-\frac{1}{2} \langle \xi, \xi \rangle)\exp(\langle  \xi, z\omega \rangle) \exp(-\frac{1}{2} \langle \eta, \eta \rangle)\exp(\langle \eta, z\omega \rangle ),
\end{multline*}
which proves the assumption.
\end{proof}
More precisely we have, compare to \cite{V10} and \cite{W95} the following proposition.
\begin{proposition}
Let $\varphi \in (S)$, $z\in \C$, then 
$$\sigma_z\varphi = \sum_{n=0}^{\infty} \langle \varphi^{(n)}_{z} , :\cdot^{n}: \rangle,$$
with kernels 
$$
\varphi^{(n)}_{z} = z^n \sum_{k=0}^{\infty} \frac{(n+2k)!}{k! n!} \left(\frac{z^2 -1}{2}\right)^k \cdot tr^k \varphi^{(n+2k)}.
$$
\end{proposition}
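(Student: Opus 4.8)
The plan is to use the Wiener--It\^o--Segal chaos decomposition $\varphi=\sum_{n=0}^{\infty}\langle\varphi^{(n)},{:}\cdot^{n}{:}\rangle$, valid for every $\varphi\in(S)$ with symmetric kernels $\varphi^{(n)}$, together with the explicit action of $\sigma_z$ on Wick exponentials already computed in the proof of the previous proposition, and to pass to general $\varphi$ by continuity. For fixed $0\neq z\in\C$ introduce the two maps
\[
A_n\colon\varphi\mapsto\big[\,n\text{-th chaos kernel of }\sigma_z\varphi\,\big],\qquad
B_n\colon\varphi\mapsto z^n\sum_{k=0}^{\infty}\frac{(n+2k)!}{k!\,n!}\Big(\frac{z^2-1}{2}\Big)^{k}tr^{k}\varphi^{(n+2k)}.
\]
The map $A_n$ is continuous from $(S)$ into any tensor Hilbert space $H_{q,\C}^{\hat{\otimes}n}$, being the composition of $\sigma_z\in L((S),(S))$ with the continuous projection onto the $n$-th kernel. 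If $B_n$ is likewise continuous into some $H_{q,\C}^{\hat{\otimes}n}$, then, since the Wick exponentials $\{{:}\exp(\langle\xi,\cdot\rangle){:}\mid\xi\in S_d(\R)\}$ are total in $(S)$, it suffices to check $A_n=B_n$ on this total set; the identity $\sigma_z\varphi=\sum_n\langle B_n\varphi,{:}\cdot^{n}{:}\rangle$ then follows for all $\varphi\in(S)$ by density and by uniqueness of the chaos decomposition.

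On a Wick exponential $\varphi={:}\exp(\langle\xi,\cdot\rangle){:}$ one has $\varphi^{(m)}=\xi^{\otimes m}/m!$, and the rearrangement of the Gaussian normalisation used in the previous proof gives
\[
\sigma_z\big({:}\exp(\langle\xi,\cdot\rangle){:}\big)=\exp\!\Big(\tfrac{z^2-1}{2}\langle\xi,\xi\rangle\Big)\cdot{:}\exp(\langle z\xi,\cdot\rangle){:},
\]
whose $n$-th chaos kernel is $\dfrac{z^n\xi^{\otimes n}}{n!}\exp\!\big(\tfrac{z^2-1}{2}\langle\xi,\xi\rangle\big)=A_n\varphi$. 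On the other hand, since $tr$ contracts two tensor slots against the trace kernel, $tr^{k}(\xi^{\otimes(n+2k)})=\langle\xi,\xi\rangle^{k}\,\xi^{\otimes n}$, so
\[
B_n\varphi=z^n\sum_{k=0}^{\infty}\frac{(n+2k)!}{k!\,n!}\Big(\frac{z^2-1}{2}\Big)^{k}\frac{\langle\xi,\xi\rangle^{k}\,\xi^{\otimes n}}{(n+2k)!}=\frac{z^n\xi^{\otimes n}}{n!}\sum_{k=0}^{\infty}\frac{1}{k!}\Big(\tfrac{z^2-1}{2}\langle\xi,\xi\rangle\Big)^{k}=\frac{z^n\xi^{\otimes n}}{n!}\exp\!\Big(\tfrac{z^2-1}{2}\langle\xi,\xi\rangle\Big).
\]
Thus $A_n\varphi=B_n\varphi$ on the total set, which is the algebraic core of the statement.

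It remains to make the reduction rigorous, and this is where the real work lies: one must show that the $k$-series defining $B_n\varphi$ converges in a suitable $H_{q,\C}^{\hat{\otimes}n}$, that $B_n$ is continuous, and that $(\varphi^{(n)}_z)_n$ is the kernel sequence of an element of $(S)$ (which, once known, must equal $\sigma_z\varphi$ by the previous paragraph). The estimate I would use combines three ingredients: nuclearity of $S_d(\R)$, which makes $tr$ bounded from $H_{p,\C}^{\hat{\otimes}m}$ into $H_{q,\C}^{\hat{\otimes}(m-2)}$ with norm at most the Hilbert--Schmidt norm $\|\iota_{p,q}\|_{\mathrm{HS}}$ of the embedding $H_p\hookrightarrow H_q$, a quantity that tends to $0$ as $p\to\infty$ for fixed $q$; the elementary factorial inequalities $(n+2k)!\le 2^{n+2k}\,n!\,(2k)!$ and $(2k)!\le 4^{k}(k!)^2$; and the rapid decay of $\sqrt{m!}\,\|\varphi^{(m)}\|_{p}$ for a Hida test function. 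Together these bound the general term of $B_n\varphi$ in $H_{q,\C}^{\hat{\otimes}n}$ by a constant times $\dfrac{2^{n/2}|z|^n}{\sqrt{n!}}\big(2\,|z^2-1|\,\|\iota_{p,q}\|_{\mathrm{HS}}\big)^{k}\|\varphi\|_{p}$; choosing $p$ large enough (depending on $z$) that $2\,|z^2-1|\,\|\iota_{p,q}\|_{\mathrm{HS}}<1$ makes the $k$-sum converge, and the resulting bound on $\|\varphi^{(n)}_z\|_{q}$, summed against the Fock weights, shows that $(\varphi^{(n)}_z)_n$ defines a Hida test function and that $B_n$ is continuous.

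I expect this convergence bookkeeping --- matching the super-exponential growth of the combinatorial coefficients $(n+2k)!/(k!\,n!)$ against the decay of the chaos kernels as mediated by the trace operator --- to be the only genuine obstacle; the identity itself is immediate from the Wick-exponential computation, and the remaining passage to general $\varphi$ is a routine density argument once the two maps $A_n$ and $B_n$ are known to be continuous.
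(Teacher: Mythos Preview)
The paper does not actually prove this proposition in-text; it merely states it with the remark ``compare to \cite{V10} and \cite{W95}''. Your argument is correct and is precisely the standard one found in those references: verify the kernel formula on the total set of Wick exponentials, where both sides are computable in closed form, and extend by density once the continuity of the kernel map $\varphi\mapsto\varphi_z^{(n)}$ is established via trace estimates and factorial bounds.

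It is worth noting that the paper does carry out essentially your continuity argument, but for the \emph{generalized} scaling $\sigma_B$ in Section~3 (Proposition \ref{prop:contgensca}). There the kernel formula is taken as a \emph{definition} of $\sigma_B$, and the paper bounds $|\varphi_B^{(n)}|_p$ using the inequality $\frac{\sqrt{(2k)!}}{k!\,2^k}<1$ together with a Cauchy--Schwarz step of the form $\sum_k\sqrt{\binom{n+k}{k}}\,\cdots\le\big(\sum_k\binom{n+k}{k}2^{-q'k}\cdots\big)^{1/2}\big(\sum_k\cdots\big)^{1/2}$, which is a slightly different bookkeeping from your $(n+2k)!\le 2^{n+2k}n!(2k)!$ and $(2k)!\le 4^k(k!)^2$, but leads to the same conclusion. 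Specialising $B=z\,\mathrm{Id}$ (so that $tr_{Id-BB^*}=(1-z^2)\,tr$ and $(B^*)^{\otimes n}=z^n$) recovers exactly the statement and your proof. So your proposal is correct and aligned with the paper's own methodology, just applied one level of generality lower.
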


\begin{definition}
Since $\sigma_z$ is a continuous mapping from $(S)$ to itself we can define its dual operator $\sigma_z^{\dag}: (S)^* \to (S)^*$ by
$$
\langle \! \langle \varphi,\sigma_z^{\dag} \Phi\rangle\! \rangle = \langle \! \langle \sigma_z\varphi, \Phi\rangle\! \rangle,
$$
for $\Phi \in  (S)^*$ and $\varphi \in  (S)$.
\end{definition}
The following proposition can be found in \cite{W95} and \cite{V10}.
\begin{proposition}
Let $\Phi \in (S)^{*}$, $\varphi,\psi \in (S)$ and $z \in \C$ then we have
\begin{itemize}
\item[(i)] $$\sigma_{z}^{\dag}\Phi = J_z \diamond \Gamma_z \Phi,$$
where $\Gamma_z$ is defined by 
$$S(\Gamma_z \Phi)(\xi) = S(\Phi)(z \xi),\quad \xi \in S_d(\R),$$
and $J_z= \mathrm{Nexp}(-\frac{1}{2} z^2 \langle \cdot , \cdot \rangle)$. In particular we have
$$
\sigma_{z}^{\dag}\1 = J_z.$$
\item[(ii)] $J_z\varphi =\sigma_z^{\dag}(\sigma_z \varphi).$
\end{itemize}
\end{proposition}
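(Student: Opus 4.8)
The plan is to prove both identities by passing to $S$-transforms and invoking the injectivity of the $S$-transform on $(S)^{*}$, which is part of the characterization Theorem~\ref{charthm}. I will use freely that the Wick product $\diamond$ is characterized by $S(\Phi\diamond\Psi)=(S\Phi)\cdot(S\Psi)$; that $\Gamma_z$ is a well-defined continuous operator from $(S)^{*}$ into itself (see \cite{W95,V10}); that the Wick exponentials $:\!\exp(\langle\eta,\cdot\rangle)\!:$ lie in $(S)$, also for complex $\eta$, with $\langle\!\langle :\!\exp(\langle\eta,\cdot\rangle)\!:,\Phi\rangle\!\rangle$ the (entire extension of the) $S$-transform of $\Phi$ at $\eta$; and that for $\varphi\in(S)$ the product $(S)^{*}\ni\Phi\mapsto\varphi\Phi\in(S)^{*}$ is defined through $\langle\!\langle\psi,\varphi\Phi\rangle\!\rangle=\langle\!\langle\varphi\psi,\Phi\rangle\!\rangle$, $\psi\in(S)$.

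For (i) I would compute $S(\sigma_z^{\dag}\Phi)$ directly. By the definitions of $\sigma_z^{\dag}$ and of the scaling operator,
\begin{equation*}
S(\sigma_z^{\dag}\Phi)(\xi)=\langle\!\langle\sigma_z(:\!\exp(\langle\xi,\cdot\rangle)\!:),\Phi\rangle\!\rangle,
\end{equation*}
and the heart of the matter is the identity
\begin{equation*}
\sigma_z\big(:\!\exp(\langle\xi,\cdot\rangle)\!:\big)=\exp\big(\tfrac12(z^2-1)\langle\xi,\xi\rangle\big)\cdot{:}\!\exp(\langle z\xi,\cdot\rangle)\!:,
\end{equation*}
obtained by writing $\big(\sigma_z:\!\exp(\langle\xi,\cdot\rangle)\!:\big)(\omega)=\exp(-\tfrac12\langle\xi,\xi\rangle)\exp(\langle z\xi,\omega\rangle)$, re-Wick-ordering $\exp(\langle z\xi,\omega\rangle)=\exp(\tfrac12\langle z\xi,z\xi\rangle){:}\!\exp(\langle z\xi,\omega\rangle)\!:$, and using bilinearity, $\langle z\xi,z\xi\rangle=z^2\langle\xi,\xi\rangle$. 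Hence $S(\sigma_z^{\dag}\Phi)(\xi)=\exp(\tfrac12(z^2-1)\langle\xi,\xi\rangle)\,S\Phi(z\xi)$. Since $S(\Gamma_z\Phi)(\xi)=S\Phi(z\xi)$ by definition and $S J_z(\xi)=\exp(\tfrac12(z^2-1)\langle\xi,\xi\rangle)$ (from $J_z=\mathrm{Nexp}(-\tfrac12 z^2\langle\cdot,\cdot\rangle)$ via $S\Psi(\xi)=\exp(-\tfrac12\langle\xi,\xi\rangle)\,T\Psi(-i\xi)$; equivalently this is the case $\Phi=\1$ of the formula just derived, using $S\1\equiv1$), we obtain $S(J_z\diamond\Gamma_z\Phi)(\xi)=S J_z(\xi)\cdot S\Phi(z\xi)=S(\sigma_z^{\dag}\Phi)(\xi)$, and injectivity of $S$ gives the first identity. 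For $\Phi=\1$ one has $\Gamma_z\1=\1$ and $J_z\diamond\1=J_z$, whence $\sigma_z^{\dag}\1=J_z$.

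For (ii) I would test against an arbitrary $\psi\in(S)$:
\begin{multline*}
\langle\!\langle\psi,J_z\varphi\rangle\!\rangle
=\langle\!\langle\psi\varphi,J_z\rangle\!\rangle
=\langle\!\langle\psi\varphi,\sigma_z^{\dag}\1\rangle\!\rangle
=\langle\!\langle\sigma_z(\psi\varphi),\1\rangle\!\rangle\\
=\langle\!\langle(\sigma_z\psi)(\sigma_z\varphi),\1\rangle\!\rangle
=\langle\!\langle\sigma_z\psi,\sigma_z\varphi\rangle\!\rangle
=\langle\!\langle\psi,\sigma_z^{\dag}(\sigma_z\varphi)\rangle\!\rangle,
\end{multline*}
using in turn: the definition of the product $J_z\varphi$; part (i) in the form $J_z=\sigma_z^{\dag}\1$; the definition of $\sigma_z^{\dag}$; the multiplicativity $\sigma_z(\psi\varphi)=(\sigma_z\psi)(\sigma_z\varphi)$; the identity $\langle\!\langle FG,\1\rangle\!\rangle=\langle\!\langle F,G\rangle\!\rangle$ valid for $F,G\in(S)$ (both sides equal $\E[FG]=\int FG\,d\mu$, the dual pairing restricting to that bilinear form on $(S)\times(L^2)$, and $\sigma_z\psi,\sigma_z\varphi\in(S)$); and finally the definition of $\sigma_z^{\dag}$ once more, with $\sigma_z\varphi\in(S)\subset(S)^{*}$. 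As $(S)$ separates $(S)^{*}$, this yields $J_z\varphi=\sigma_z^{\dag}(\sigma_z\varphi)$.

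The manipulations are routine; the point that genuinely needs care is the well-definedness of the objects appearing in the statement. That $\Gamma_z$ maps $(S)^{*}$ into itself rests on the fact that the $S$-transform of a Hida distribution extends to an entire function on the complexification $S_{d,\C}(\R)$, so that $\xi\mapsto S\Phi(z\xi)$ is again a $U$-functional (this is where I would lean on \cite{W95,V10}); granting that, $J_z\diamond\Gamma_z\Phi$ is a Hida distribution by Theorem~\ref{charthm}, since a product of two $U$-functionals is a $U$-functional and $S J_z(\xi)=\exp(\tfrac12(z^2-1)\langle\xi,\xi\rangle)$ trivially satisfies U1 and satisfies U2 with $C=1$, $D=\tfrac12|z^2-1|$, $p=0$. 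The only other thing to watch is the consistent use of the bilinear (rather than sesquilinear) extension of $\langle\cdot,\cdot\rangle$ and $\langle\!\langle\cdot,\cdot\rangle\!\rangle$ throughout: this is exactly what produces the factor $z^2$ (and not $|z|^2$), and it is what lets the identities hold for every $z\in\C$, including $z=\pm i$.
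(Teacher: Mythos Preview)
Your proof is correct. The paper does not actually prove this proposition in place---it is quoted from \cite{W95,V10}---but the paper \emph{does} prove the operator-valued generalization (Proposition~\ref{genwickform}), and your argument is essentially that proof specialized to $B=z\,Id$: compute $S(\sigma_z^{\dag}\Phi)$ by moving $\sigma_z$ onto the Wick exponential, re-Wick-order, identify the prefactor as $SJ_z$, and then derive (ii) from $\sigma_z^{\dag}\1=J_z$ together with the multiplicativity of $\sigma_z$ via a duality chain. The only extra content you add beyond the paper's later proof is the explicit justification that $\Gamma_z\Phi$ and $J_z\diamond\Gamma_z\Phi$ are again in $(S)'$ via the $U$-functional criterion, which is a welcome clarification.
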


\section{Generalized Scaling Operators}
In a view to the previous section we want to generalize the notion of scaling to bounded operators. More precisely we investigate for which kind of linear mappings $B \in L(S(\R)', S(\R)')$ there exists some operator $\sigma_{B}:(\mathcal N)\to(\mathcal N)$ such that
\begin{align*}	
\Phi_{(BB^*)}\cdot\varphi :=\sigma_{B}^\dag\sigma_{B}\varphi.
\end{align*}
Further we state a generalization of the Wick formula to Gauss kernels.
We start with the definition of $\sigma_{B}$.

\begin{definition}\label{def_tr_B}
Let $B\in L(S_d(\R)_\C,S'_d(\R)_\C)$. By $tr_B$ we denote the element in $S'_d(\R)_\C\otimes S'_d(\R)_\C$, which is defined by
$$\forall \xi,\ \eta\in S_d(\R)_\C:\ tr_B(\xi\otimes\eta):= \left\langle \xi,B\eta\right\rangle.$$
Note that $tr_B$ is not symmetric. Further there exists a $q\in\mathbb{Z}$ such that $tr_B\in \cH_{q,\C}\otimes \cH_{q,\C}$.
\end{definition}
\begin{proposition}\label{p:trBN}
Let $B\in L(\cH_\C,\cH_\C)$ be a Hilbert-Schmidt operator. Then $tr_B \in \cH_\C\otimes\cH_\C$. Further for each orthonormal basis $(e_j)_{j\in\N}$ of $\cH_\C$ it follows:
$$tr_B = \sum\limits_{i=0}^{\infty} Be_i\otimes e_i.$$
\end{proposition}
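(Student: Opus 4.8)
The plan is to read the asserted identity as an equality in the Hilbert space $\cH_\C\otimes\cH_\C$: first I would check that the series on the right-hand side converges there, and then identify its sum with $tr_B$ by pairing against elementary tensors of Schwartz functions.

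Fix an orthonormal basis $(e_j)_{j\in\N}$ of $\cH_\C$ and recall that $B$ is Hilbert--Schmidt precisely when $\sum_{j}\norm{Be_j}^2<\infty$, a quantity independent of the basis. The first observation is that the family $(Be_j\otimes e_j)_{j\in\N}$ is \emph{orthogonal} in $\cH_\C\otimes\cH_\C$: for $i\neq j$ one has $\bigl(Be_i\otimes e_i,\,Be_j\otimes e_j\bigr)=(Be_i,Be_j)\,(e_i,e_j)=0$, while $\norm{Be_j\otimes e_j}^2=\norm{Be_j}^2$. Hence $\sum_{j}\norm{Be_j\otimes e_j}^2=\sum_{j}\norm{Be_j}^2<\infty$, so the partial sums form a Cauchy sequence and $\Theta\defb\sum_{j}Be_j\otimes e_j$ exists in $\cH_\C\otimes\cH_\C$, with $\norm{\Theta}^2=\sum_j\norm{Be_j}^2$. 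Via the continuous embedding $\cH_\C\otimes\cH_\C\hookrightarrow\cH_{q,\C}\otimes\cH_{q,\C}$ the same series converges in the space in which Definition \ref{def_tr_B} places $tr_B$.

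It remains to show $\Theta=tr_B$. Both are elements of $\cH_{q,\C}\otimes\cH_{q,\C}$, so it suffices to compare the (bilinear) dual pairings with the total family of tensors $\xi\otimes\eta$, $\xi,\eta\in S_d(\R)_\C$. Interchanging the continuous maps $\langle\xi,\cdot\rangle$ and $B$ with the (absolutely convergent) series — legitimate since $B$ is bounded, hence continuous on $\cH_\C$ — and using the Parseval expansion $\eta=\sum_{j}\langle\eta,e_j\rangle e_j$ in $\cH_\C$, one obtains
\begin{align*}
\langle\xi\otimes\eta,\,\Theta\rangle
&=\sum_{j}\langle\xi,Be_j\rangle\,\langle\eta,e_j\rangle
=\Bigl\langle\xi,\;B\bigl(\textstyle\sum_{j}\langle\eta,e_j\rangle e_j\bigr)\Bigr\rangle\\
&=\langle\xi,B\eta\rangle=tr_B(\xi\otimes\eta).
\end{align*}
Since the linear span of these tensors is dense, $\Theta=tr_B$; in particular $tr_B=\Theta\in\cH_\C\otimes\cH_\C$, and, $\Theta$ having been built from an arbitrary basis, this also yields the asserted basis independence.

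There is no genuine obstacle here — the statement is essentially a bookkeeping lemma whose content is the orthogonality of the summands together with Parseval. The two points that deserve a little care are: the justification of moving $B$ (and the pairing $\langle\xi,\cdot\rangle$) through the infinite sum, which rests only on boundedness of $B$ and on Cauchy--Schwarz for the absolute convergence of $\sum_j\langle\eta,e_j\rangle Be_j$ in $\cH_\C$; and the reconciliation of topologies, i.e.\ confirming that the element $\Theta$ of the Hilbert space $\cH_\C\otimes\cH_\C$ really coincides with $tr_B$ as an element of the possibly larger space $\cH_{q,\C}\otimes\cH_{q,\C}$ from Definition \ref{def_tr_B}, which is immediate once one uses that the embedding is continuous and injective and that the two functionals agree on the total set of elementary Schwartz tensors.
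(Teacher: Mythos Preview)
Your argument is correct and follows essentially the same two-step scheme as the paper: first establish that $\sum_j Be_j\otimes e_j$ converges in $\cH_\C\otimes\cH_\C$ with squared norm $\sum_j\norm{Be_j}^2=\norm{B}_{HS}^2$, then identify the limit with $tr_B$ by pairing against a total family of elementary tensors. The only cosmetic differences are that the paper computes the norm by expanding $Be_j=\sum_i\langle e_i,Be_j\rangle e_i$ and using the orthonormal basis $(e_i\otimes e_j)_{i,j}$ rather than your direct orthogonality observation, and that it tests the identity on the basis tensors $e_k\otimes e_l$ instead of Schwartz tensors $\xi\otimes\eta$.
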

\begin{proof}
$\left|\sum\limits_{i,j=0}^{\infty}\left\langle e_i,Be_j\right\rangle e_i\otimes e_j\right|_0^2 = \sum\limits_{i,j=0}^{\infty}\left|\left\langle e_i,Be_j\right\rangle\right|^2=\sum\limits_{j=0}^{\infty}\left|Be_j\right|_0^2=\left\|B\right\|_{HS}^2<\infty$.
Hence the sum is a well-defined element in $\cH_\C\otimes\cH_\C$. The identity follows by verifying the formula for $\left\{e_k\otimes e_l\right\}_{k,l\in\N}$:
$$
\left\langle e_k\otimes e_l,  \sum\limits_{i=0}^{\infty} Be_i\otimes e_i \right\rangle =(e_k, \overline{B e_l})_{\cH} (e_l,\overline{e_l})_{\cH} =(e_k, \overline{B e_l})_{\cH}  = \langle e_k, B e_l \rangle.$$
\end{proof}
\begin{proposition}
In the case $S(\R)=S(\R)$ and $B\in L(S_{\C}(\R),S'_{\C}(\R))$ we have
$$tr_B = \sum\limits_{i=0}^{\infty} Bh_i\otimes h_i$$
\end{proposition}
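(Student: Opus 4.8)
The plan is to reduce the claim to the Hilbert-space situation of Proposition~\ref{p:trBN}, exploiting two features special to the choice $S(\R)$: the Hermite functions $(h_i)_{i=0}^{\infty}$ form an orthonormal basis of $\cH=L^2(\R)$ which is contained in $S(\R)$ and is total in every $\cH_{p,\C}$, and, by Definition~\ref{def_tr_B}, the object $tr_B$ already lives in a \emph{single} Hilbert tensor product. Fixing such a space, say $tr_B\in\cH_{-r,\C}\otimes\cH_{-r,\C}$ with $r\in\N_0$, is the starting point; it is precisely the a-priori information an abstract kernel theorem provides here and which is unavailable for a general nuclear space, explaining the restriction in the statement.

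First I would record the elementary spectral facts about the Hermite basis: $h_i\in S(\R)$, $(h_i)$ is orthonormal in $\cH$, and in the standard normalisation of the Hilbertian norms $\norm{h_i}_p$ grows polynomially in $i$ (like $(2i+2)^p$), so that $\norm{h_i}_{-q'}$ decays polynomially in $i$ for $q'$ large. Using $tr_B\in\cH_{-r,\C}\otimes\cH_{-r,\C}$, I would then derive the a-priori bound on the vectors $Bh_i$: pairing the kernel with $h_i\otimes h_j$ gives
\[
 \abs{\langle h_i,Bh_j\rangle}=\abs{(tr_B)(h_i\otimes h_j)}\leq \norm{tr_B}\,\norm{h_i}_{r}\,\norm{h_j}_{r}
\]
(the norm of $tr_B$ being taken in $\cH_{-r,\C}\otimes\cH_{-r,\C}$), and squaring and summing over $i$ against a sufficiently negative weight yields $\norm{Bh_j}_{-q'}\leq C\,(2j+2)^{r}$ for $q'$ large enough. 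This quantitative estimate is the heart of the matter.

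With this bound in hand, $\norm{Bh_i\otimes h_i}_{\cH_{-q',\C}\otimes\cH_{-q',\C}}=\norm{Bh_i}_{-q'}\norm{h_i}_{-q'}$ is dominated by a summable sequence once $q'$ is chosen large enough, so the series $\sum_{i=0}^{\infty}Bh_i\otimes h_i$ converges \emph{absolutely} in $\cH_{-q',\C}\otimes\cH_{-q',\C}$; denote its sum by $G$. Enlarging $q'$ if needed, both $G$ and $tr_B$ lie in this one Hilbert space, so it suffices to check $G=tr_B$ on the total family $\{h_i\otimes h_j\}$ of $\cH_{q',\C}\otimes\cH_{q',\C}$. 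Exactly as in the proof of Proposition~\ref{p:trBN}, orthonormality of $(h_k)$ collapses the sum,
\begin{align*}
 \Big\langle h_i\otimes h_j,\ \sum_{k=0}^{\infty}Bh_k\otimes h_k\Big\rangle &=\sum_{k=0}^{\infty}(h_i,\overline{Bh_k})\,(h_j,\overline{h_k})\\
 &=(h_i,\overline{Bh_j})=\langle h_i,Bh_j\rangle=(tr_B)(h_i\otimes h_j),
\end{align*}
so $G=tr_B$. Equivalently, one may expand $\eta=\sum_j(\eta,h_j)h_j$ in $S(\R)$, apply the continuous operator $B$ termwise, and read off $(tr_B)(\xi\otimes\eta)=\langle\xi,B\eta\rangle=\sum_j(\eta,h_j)\langle\xi,Bh_j\rangle=\langle G,\xi\otimes\eta\rangle$ for all $\xi,\eta\in S(\R)$.

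The step I expect to be the main obstacle is the a-priori estimate on $\norm{Bh_i}$: since $B$ merely maps into $S'(\R)=\bigcup_q\cH_{-q}$, without the input $tr_B\in\cH_{-r,\C}\otimes\cH_{-r,\C}$ the images $Bh_i$ could in principle sit in spaces $\cH_{-q_i}$ with $q_i\to\infty$, and then no fixed Hilbert tensor product would capture the series. A secondary point needing care is the non-symmetry of $tr_B$ together with the bilinear-pairing convention, so that the relation $(h_j,\overline{h_k})=\delta_{jk}$ eliminates the intended index and the surviving factor is $\langle h_i,Bh_j\rangle$, not $\langle h_j,Bh_i\rangle$.
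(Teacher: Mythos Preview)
Your proof is correct and follows the same two-step skeleton as the paper: first show that $\sum_i Bh_i\otimes h_i$ converges absolutely in some fixed $\cH_{-q,\C}\otimes\cH_{-q,\C}$, then identify the sum with $tr_B$ by testing against the total family $\{h_k\otimes h_l\}$, exactly as in Proposition~\ref{p:trBN}.

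The one genuine difference is how you obtain the a-priori bound on $Bh_j$. You route it through the kernel: from $tr_B\in\cH_{-r,\C}\otimes\cH_{-r,\C}$ you extract $|\langle h_i,Bh_j\rangle|\le\|tr_B\|\,\|h_i\|_r\|h_j\|_r$ and then reassemble $\|Bh_j\|_{-q'}$. The paper instead observes directly that continuity of $B:S(\R)_\C\to S'(\R)_\C$ already forces $B\in L(\cH_{p,\C},\cH_{-p,\C})$ for some $p\ge0$, so that $|Bh_n|_{-q}\le K|h_n|_p$ is immediate from the operator norm; choosing $q>p+1$ then gives $\sum_n|Bh_n|_{-q}^2|h_n|_{-q}^2\le K\sum_n(2n+2)^{2(p-q)}<\infty$. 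This is shorter and also dissolves the worry you flag at the end: the images $Bh_i$ cannot drift out to $\cH_{-q_i}$ with $q_i\to\infty$, because continuity of $B$ on a countably Hilbert space already pins them in a single $\cH_{-p}$. Your kernel-based bound recovers the same estimate (indeed with the same exponent, since the $r$ for $tr_B$ and the $p$ for $B$ are essentially interchangeable via the kernel theorem), just less directly. A minor related point: the fact that $tr_B$ sits in some fixed $\cH_{-r,\C}\otimes\cH_{-r,\C}$ is not what is special to $S(\R)$---that is recorded in Definition~\ref{def_tr_B} for any of the spaces considered. What is genuinely being used here is the Hermite basis $(h_i)\subset S(\R)$ with its explicit norm asymptotics $\|h_i\|_p=(2i+2)^p$.
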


\begin{proof}
By the continuity of the bilinear form $\langle \cdot, B \cdot \rangle$ on $S(\R)_{\C}\times S(\R)_{\C}$ there exists $p\geq 0$ such that $B \in L(\cH_{p,\C},\cH_{-p,\C})$. Let $q>p+1$. Then
$$
|\sum_{n=0}^{\infty} Bh_n \otimes h_n |_{-q}^2 = \sum_{n=0}^{\infty} |Bh_n|_{-q}^2 \cdot |h_n|^2_{-q}
\leq K \sum_{n=0}^{\infty} |h_n|_p^2 \cdot |h_n|_{-q}^2,
$$
for some $K>0$. For the last expression we have 
$$
 K \sum_{n=0}^{\infty} |h_n|_p^2 \cdot |h_n|_{-q}^2
\leq K \sum_{n=0}^{\infty} \left(\frac{1}{2n+2}\right)^{2} <\infty.$$
Then as in the proof of Proposition \ref{p:trBN} we obtain
$$
tr_B = \sum_{n=0}^{\infty} Bh_n \otimes h_n,$$
since $$\overline{\mathrm{span} \{h_n \otimes h_l\}_{n,l}}^{S(\R)\otimes S(\R)} =S(\R)\otimes S(\R).$$  
\end{proof}
\begin{definition}
For $B\in L(S'_d(\R)_\C,S'_d(\R)_\C)$ we define $\sigma_{B}\varphi$, $\varphi\in (S)$, via its chaos decomposition, which is given by
\begin{align}\label{chaosscalt0t}
\sigma_{B}\varphi=\sum_{n=0}^\infty \left\langle {\varphi}_{B}^{(n)}, :\cdot^{\otimes n}: \right\rangle,
\end{align}
with kernels
\begin{align*}
		{\varphi}_{B}^{(n)}= \sum_{k=0}^\infty \frac{(n+2k)!}{k!n!}\left(-\halb\right)^k (B^*)^{\otimes n}({\rm{tr}}_{(Id-BB^*)}^k\varphi^{(n+2k)}).
\end{align*}
Here, $B^*$ means the dual operator of B with respect to $\left\langle \cdot,\cdot\right\rangle$. Further for $A\in L(S_d(\R)_\C, S'_d(\R)_\C)$, the expression  ${\rm{tr}}_{A}^k\varphi^{(n+2k)}$ is defined by
\begin{align*}	{\rm{tr}}_{A}^k\varphi^{(n+2k)}:=\left\langle {\rm{tr}}_{A}^{\otimes k},\varphi^{(n+2k)}\right\rangle\in {\mathcal N}^{\hat{\otimes} n},
\end{align*}
where the generalized trace kernel ${\rm{tr}}_{A}$ is defined in \ref{def_tr_B}.
\end{definition}

\begin{proposition}\label{tensor_estim}
Let $B\in L(S_d(\R)_\C, S_d(\R)_\C)$ and $n\in\N,\ n>0$. Then
for all $p\in\N$ there exists a $K>0$ and $q_1,q_2\in\N$, with $p<q_1<q_2$ such that for all $\theta\in S_d(\R)_\C^{\otimes n}$ we have
$$ \left|B^{\otimes n}\theta\right|_p \leq \left(K\left\|B\right\|_{q_2 , q_1}\left\|i_{q_1 , p}\right\|_{HS}\right)^{n}\cdot \left|\theta\right|_{q_2}.$$  
\end{proposition}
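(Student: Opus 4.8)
The plan is to read $B^{\otimes n}$ between suitable Hilbert spaces of the chain and to factor it as a composition
$$
H_{q_2,\C}^{\otimes n}\xrightarrow{\;B^{\otimes n}\;}H_{q_1,\C}^{\otimes n}\xrightarrow{\;i_{q_1,p}^{\otimes n}\;}H_{p,\C}^{\otimes n},
$$
where the second arrow is a tensor power of a Hilbert--Schmidt embedding, responsible for the finiteness of the bound, while the first arrow only contributes the operator norm of $B$.

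First I would fix $p\in\N$. Since $B\in L(S_d(\R)_\C,S_d(\R)_\C)$ and $S_d(\R)_\C=\bigcap_{q\in\N}H_{q,\C}$ is the projective limit of the chain $(H_{q,\C})_q$, for every $r\in\N$ there is $s\geq r$ with $B\in L(H_{s,\C},H_{r,\C})$, i.e.\ $\|B\|_{s,r}<\infty$. I would then pick $q_1>p$ large enough that the canonical embedding $i_{q_1,p}\colon H_{q_1,\C}\to H_{p,\C}$ is Hilbert--Schmidt; this is guaranteed by the standard structure of the chain $(H_q)_q$ recalled in Section~2, via the same type of estimate $\|i_{q_1,p}\|_{HS}^2=\sum_j|h_j|_p^2/|h_j|_{q_1}^2<\infty$ (valid once $q_1-p$ is large) already used in the preceding proofs. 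Finally I would pick $q_2>q_1$ with $\|B\|_{q_2,q_1}<\infty$. This yields the factorization above.

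It then remains to invoke two standard facts about Hilbert-space tensor products. First, the operator norm is multiplicative under tensor products of bounded operators between Hilbert spaces; applied inductively this gives $\|B^{\otimes n}\|_{L(H_{q_2,\C}^{\otimes n},H_{q_1,\C}^{\otimes n})}\leq\|B\|_{q_2,q_1}^n$ (alternatively one uses the singular value decomposition of $B\colon H_{q_2,\C}\to H_{q_1,\C}$). Second, the Hilbert--Schmidt norm is multiplicative under tensor products, which one checks directly on product orthonormal bases, so $\|i_{q_1,p}^{\otimes n}\|_{HS}=\|i_{q_1,p}\|_{HS}^n$, hence $\|i_{q_1,p}^{\otimes n}\|_{L(H_{q_1,\C}^{\otimes n},H_{p,\C}^{\otimes n})}\leq\|i_{q_1,p}\|_{HS}^n$ since the operator norm is dominated by the Hilbert--Schmidt norm. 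Composing the two estimates along the factorization yields
$$
|B^{\otimes n}\theta|_p\leq\|i_{q_1,p}\|_{HS}^n\,\|B\|_{q_2,q_1}^n\,|\theta|_{q_2}
$$
first for $\theta$ in the algebraic tensor product and then, by density and continuity, for all $\theta\in S_d(\R)_\C^{\otimes n}$; the claim holds with $K=1$ (and any $K\geq1$ works, which is why the constant is left free in the statement).

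The only genuine care-point I anticipate is the bookkeeping: one must be sure that $|\cdot|_p$ on $S_d(\R)_\C^{\otimes n}$ is precisely the Hilbert-space norm of $H_{p,\C}^{\otimes n}$, so that the Hilbert--Schmidt and operator-norm calculus for tensor products applies verbatim, and that the chain's embeddings $i_{q,p}$ are Hilbert--Schmidt once $q-p$ is large enough — both of which are part of the standard setup of $S_d(\R)$ and have already been used above. Everything else is routine.
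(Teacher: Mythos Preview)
Your argument is correct and, in fact, cleaner than the one in the paper. Both proofs choose $p<q_1<q_2$ so that $i_{q_1,p}$ is Hilbert--Schmidt and $B\in L(H_{q_2,\C},H_{q_1,\C})$, but the routes diverge from there. The paper expands $|B^{\otimes n}\theta|_p^2$ via Parseval in $H_{p,\C}^{\otimes n}$, passes to the dual via the Riesz map $\mathcal I_p$, and estimates $\sum_J |(B^*)^{\otimes n}\mathcal I_p^{-1}(e_J)|_{-q_2}^2$ using $\|B^*\|_{-q_1,-q_2}=\|B\|_{q_2,q_1}$ and the identity $\sum_J|\mathcal I_p^{-1}(e_J)|_{-q_1}^2=\|i_{q_1,p}\|_{HS}^{2n}$; an unspecified constant $K$ is carried along. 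Your approach instead factors $B^{\otimes n}$ directly as $i_{q_1,p}^{\otimes n}\circ B^{\otimes n}$ and invokes the multiplicativity of the operator norm and of the Hilbert--Schmidt norm under Hilbert-space tensor products. This avoids the dual detour entirely and delivers the inequality with $K=1$, which explains why the constant in the statement is in fact superfluous. The paper's computation is essentially a hands-on verification of those same tensor-product norm identities in this particular setting; your version isolates the abstract facts and applies them.
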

\begin{proof}
Choose $q_1,\ q_2\in\N$ with $q_2>q_1> p$ and
$$B\in L(H_{q_1,\C},H_{p,\C})\text{ and }B\in L(H_{q_2,\C},H_{q_1,\C},).$$
For a shorter notation we write $e_J$ for $e_{j_1}\otimes\cdots\otimes e_{j_n}$.
Now let $(e_J)_J$ be an orthonormal basis of $H_{p}^{\otimes n}$. Further let $\mathcal{I}_p:\ (H_{p,\C}^{\otimes n})'\rightarrow H_{p,\C}^{\otimes n}$ be the Riesz isomorphism. Then for $\theta\in S_d(\R)_\C^{\otimes n}$
\begin{align*}
\left|B^{\otimes n}\theta\right|_p^2 
&\ = \sum\limits_J \left|(B^{\otimes n}\theta,e_J)_p\right|^2 = \sum\limits_J \left|\left\langle B^{\otimes n} \theta,\overline{\mathcal{I}_p^{-1}(e_J)}\right\rangle\right|^2\\
&\ = \sum\limits_J \left|\left\langle B^{\otimes n}\theta,\mathcal{I}_p^{-1}(e_J)\right\rangle\right|^2 = \sum\limits_J \left|\left\langle \theta,(B^*)^{\otimes n}\mathcal{I}_p^{-1}(e_J)\right\rangle\right|^2\\
&\ \leq \left|\theta\right|_{q_2}^2\cdot \sum\limits_{J}\left|(B^*)^{\otimes n}\mathcal{I}_p^{-1}(e_J)\right|^2_{-q_2},
\end{align*}
We can estimate this with the norm of $B^*:H_{-q_1,\C} \to H_{-q_2,\C}$, denoted by $\|B^*\|_{-q_1 , -q_2}$. Then for a $K>0$ we have 
\begin{align*}
\left|\theta\right|_{q_2}^2\cdot \sum\limits_{J}\left|(B^*)^{\otimes n}\mathcal{I}_p^{-1}(e_J)\right|^2_{-q_2}&\ \leq \left|\theta\right|_{q_2}^2\cdot K^{2n} \left\|B^*\right\|_{-q_1 , -q_2}^{2n}\sum\limits_{J}\left|\mathcal{I}_p^{-1}(e_J)\right|^2_{-q_1},\\
&\ = \left|\theta\right|_{q_2}^2\cdot K^{2n} \left\|B\right\|_{q_1 , q_2}^{2n}\left\|i_{q_1,p}\right\|_{HS}^{2n},\\
\end{align*}
where the last equation is due to \cite[Theorem 4.10(2), p. 93]{Ru74}
\end{proof}
Next we show the continuity of the generalized scaling operator.

\begin{proposition}\label{prop:contgensca}
Let $B:S_d'(\R)_{\C} \to S_d'(\R)_{\C}$ a bounded operator. 
Then $\varphi\mapsto \sigma_{B}\varphi$ is continuous from $(S)$ into itself.
\end{proposition}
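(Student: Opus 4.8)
The plan is to work entirely on the level of chaos decompositions and reduce continuity of $\sigma_B$ to a norm estimate on each kernel $\varphi_B^{(n)}$, summed against the weights that define the topology of $(S)$. Recall that $\varphi \in (S)$ iff for every $p$ and every $q \in \N$ one has $\|\varphi\|_{p,q}^2 := \sum_{n=0}^\infty (n!)^2 2^{nq} |\varphi^{(n)}|_p^2 < \infty$, and a sequence of weights of this form generates the topology. So it suffices to show: for every $p, q$ there exist $p', q'$ and $C>0$ such that $\|\sigma_B \varphi\|_{p,q} \le C \|\varphi\|_{p',q'}$ for all $\varphi \in (S)$. First I would fix an orthonormal basis and a pair of Sobolev indices witnessing boundedness of $B$ (equivalently of $B^*$) between the relevant $H_p$'s, as in the hypothesis "$B$ bounded on $S_d'(\R)_\C$" unpacked via continuity of the associated bilinear form.

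The core computation splits the kernel formula
\begin{align*}
\varphi_B^{(n)} = \sum_{k=0}^\infty \frac{(n+2k)!}{k!\,n!}\left(-\halb\right)^k (B^*)^{\otimes n}\bigl(\mathrm{tr}_{(Id-BB^*)}^k \varphi^{(n+2k)}\bigr)
\end{align*}
into three pieces to be estimated in turn. Step one: control $|(B^*)^{\otimes n}\theta|_p$ in terms of $|\theta|_{q_2}$ via Proposition \ref{tensor_estim} (applied to $B^*$, or by the symmetric version of that argument), picking up a factor $(K \|B\|_{q_2,q_1}\|i_{q_1,p}\|_{HS})^n$. Step two: control the contraction against $\mathrm{tr}_{(Id-BB^*)}^{\otimes k}$; since $Id - BB^*$ is again a bounded operator between suitable $H$-spaces, $\mathrm{tr}_{(Id-BB^*)}$ lies in some $\cH_{-r,\C}\otimes\cH_{-r,\C}$ with a fixed norm bound, say $\|\mathrm{tr}_{(Id-BB^*)}\|_{-r} \le \Lambda$, and pairing $\varphi^{(n+2k)}$ (estimated in a high norm $H_{s}$ with $s > r$) against the $k$-fold tensor loses a Hilbert–Schmidt embedding factor $\|i_{s,r}\|_{HS}$ for each of the $2k$ contracted slots, giving a bound of the shape $\Lambda^k \|i_{s,r}\|_{HS}^{2k} |\varphi^{(n+2k)}|_s$ on $|\mathrm{tr}_{(Id-BB^*)}^k\varphi^{(n+2k)}|_{q_2}$. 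Step three: bound the combinatorial prefactor $\frac{(n+2k)!}{k!\,n!}2^{-k}$; using $(n+2k)! \le (n+2k)!$ trivially but more usefully $\frac{(n+2k)!}{n!} \le (n+2k)^{2k} \le C^k (n+2k)!/(2k)!$-type estimates, or simply absorbing it by raising $q$: one checks $\frac{(n+2k)!}{k!n!} \le 2^{n+2k} (n+2k)!/(n+2k)!$... more carefully, $\frac{(n+2k)!}{k! n!} = \binom{n+2k}{k}\binom{n+k}{k} \le 2^{n+2k} \cdot 2^{n+k}$, so it is at most $8^{n}\cdot 8^{k}$ up to adjusting constants, which is harmless.

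Assembling, one gets $|\varphi_B^{(n)}|_p \le A^n \sum_{k=0}^\infty \rho^k |\varphi^{(n+2k)}|_s$ for constants $A, \rho$ independent of $n$ (with $\rho$ made $<1$, or at least summably controlled, by choosing $r$ large enough that $\|i_{s,r}\|_{HS}$ is tiny — this is the standard "buy smallness with a derivative" trick). Then by Cauchy–Schwarz in $k$ and reindexing $m = n+2k$,
\begin{align*}
\sum_{n} (n!)^2 2^{nq} |\varphi_B^{(n)}|_p^2 \le \sum_n (n!)^2 2^{nq} A^{2n} \Bigl(\sum_k \rho^k |\varphi^{(n+2k)}|_s\Bigr)^2,
\end{align*}
and bounding $(n!)^2 \le (m!)^2$ for $m \ge n$, pulling out the geometric series in $k$, and comparing $2^{nq}A^{2n}$ with $2^{mq'}$ for a larger $q'$, one dominates this by a constant times $\|\varphi\|_{s,q'}^2 < \infty$. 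Hence $\sigma_B\varphi \in (S)$ and the map is continuous. I expect the main obstacle to be purely bookkeeping: keeping the four index families ($p$ fixed target, $q_1 < q_2$ for the tensor estimate on $B^*$, $r$ for the trace kernel, $s$ for the high norm on $\varphi^{(n+2k)}$) consistent so that all embeddings used are genuinely Hilbert–Schmidt, and verifying that the $n$-dependent constant $A^n$ is truly absorbed by increasing the second (weight) index rather than the first — i.e. that no factor secretly grows faster than geometrically in $n$. The combinatorial prefactor $\frac{(n+2k)!}{k!n!}$ is the one place where a careless estimate would break the geometric bound, so that is where I would be most careful.
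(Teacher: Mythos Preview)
Your overall strategy matches the paper's proof exactly: estimate each kernel $\varphi_B^{(n)}$ by controlling $(B^*)^{\otimes n}$ via Proposition~\ref{tensor_estim}, bounding the trace contraction, handling the combinatorial prefactor, applying Cauchy--Schwarz in $k$, and then summing in $n$ with a larger weight index. The place where your outline actually breaks, however, is precisely the one you flagged. Your claimed identity $\frac{(n+2k)!}{k!\,n!} = \binom{n+2k}{k}\binom{n+k}{k}$ is off by a factor of $k!$ (the right-hand side equals $\frac{(n+2k)!}{(k!)^2 n!}$), and the prefactor $\frac{(n+2k)!}{k!\,n!}\,2^{-k}$ is \emph{not} geometric in $k$: already for $n=0$ it equals $\frac{(2k)!}{k!\,2^k}\sim 2^k k!/\sqrt{\pi k}$, which grows factorially. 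Hence the intermediate bound $|\varphi_B^{(n)}|_p \le A^n \sum_k \rho^k |\varphi^{(n+2k)}|_s$ with a genuinely geometric $\rho^k$ is false, and the subsequent Cauchy--Schwarz step cannot close as written.

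The paper fixes this not by forcing the prefactor to be geometric, but by splitting
\[
\frac{(n+2k)!}{k!\,n!\,2^k}=\frac{1}{\sqrt{n!}}\cdot\sqrt{\binom{n+2k}{2k}}\cdot\frac{\sqrt{(2k)!}}{k!\,2^k}\cdot\sqrt{(n+2k)!},
\]
using $\frac{\sqrt{(2k)!}}{k!\,2^k}<1$, and letting the dangerous factor $\sqrt{(n+2k)!}$ be absorbed directly into the $(S)$-norm of $\varphi$ (note the standard weight here is $n!$, not $(n!)^2$ as in your display). After reindexing $2k\to k$ and bounding by the full sum over $k$, the remaining piece is handled via the negative binomial series $\sum_k \binom{n+k}{k} x^k = (1-x)^{-(n+1)}$ with $x=2^{-q'}|\mathrm{tr}_{Id-BB^*}|_{-q_2}<1$ for $q'$ large, which yields the geometric-in-$n$ control needed to sum. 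So your architecture is right, but the combinatorics must be done this way for the argument to go through.
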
    
\begin{proof}
Let ${\varphi}_{B}^{(n)}$ as in Definition \ref{chaosscalt0t}.\\
 First choose $q_1>0$, such that $|tr_{Id-BB^*}|^k_{-q_1}<\infty$. Then, by \ref{tensor_estim}, there exist $C(B)>0,\ q_2> q1$ such that
\begin{multline*}
|{\varphi}_{B}^{(n)}|_{p} = |\sum_{k=0}^\infty \frac{(n+2k)!}{k!n!}\left(-\halb\right)^k (B^*)^{\otimes n} ({\rm{tr}}_{Id-BB^*}^k\varphi^{(n+2k)})|_{p}\\
\leq \frac{1}{\sqrt{n!}} C(B)^n\sum_{k=0}^{\infty} \sqrt{\binom{n+2k}{2k}} \frac{\sqrt{(2k)!}}{k!2^k} \sqrt{(n+2k)!}  |tr_{Id-BB^*}|^k_{-q_2}|\varphi^{n+2k}|_{q_2}
\end{multline*}
Since   $\frac{\sqrt{(2k)!}}{k!2^k}<1$, see \cite{W95},  we have
\begin{multline*}
\frac{1}{\sqrt{n!}}  C(B)^n \sum_{k=0}^{\infty} \sqrt{\binom{n+2k}{2k}} \frac{\sqrt{(2k)!}}{k!2^k} \sqrt{(n+2k)!} |tr_{Id-BB^*}|^k_{-q_2} |\varphi^{n+2k}|_{q_2}\\
\leq \frac{1}{\sqrt{n!}} C(B)^n \sum_{k=0}^{\infty} \sqrt{\binom{n+k}{k}} \sqrt{(n+k)!} |tr_{Id-BB^*}|^{\frac{k}{2}}_{-q_2}  |\varphi^{n+k}|_{q_2} \\
\leq \frac{1}{\sqrt{n!}} C(B)^n 2^{-n \frac{q'}{2}} \left(\sum_{k=0}^{\infty} \binom{n+k}{k} 2^{-q'k}  |tr_{Id-BB^*}|^k_{-q_2}\right)^{\frac{1}{2}}\\
\times \left(\sum_{k=0}^{\infty} (n+k)! 2^{q'(n+k)} |\varphi^{(n+k)}|_{q_2}^2\right)^{\frac{1}{2}}\\
\leq \|\varphi\|_{q_2,q'} \frac{1}{\sqrt{n!}} 2^{-n \frac{q'}{2}} C(B)^n\left(1-2^{-q'}  |tr_{Id-BB^*}|_{-q_2}\right)^{-\frac{n+1}{2}}.
\end{multline*}
If $q'$ fulfills $$2^{-q'}  |tr_{Id-BB^*}|_{-q_2}<1,$$
we obtain
$$\|\sigma_B \varphi\|^2_{q_2,q} \leq \|\varphi\|_{q_2,q'}^2 \cdot \sum_{n=0}^{\infty} 2^{n(q-q')} C(B)^{2n} \left(1- 2^{-q'}   |tr_{Id-BB^*}|{-q_2}\right)^{-(n+1)},$$
where the right hand side converges if $q'-q$ is large enough.
\end{proof}
\begin{proposition}
Let $\varphi \in (S)$ given by its continuous version. Then it holds
$$\sigma_{B} \varphi (\omega) = \varphi(B \omega),$$
if $B \in L(S_d'(\R), S_d'(\R)),\ \omega\in S_d'(\R)$.\\
\end{proposition}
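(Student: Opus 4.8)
The plan is to verify the identity on the total set of Wick exponentials and then extend by continuity, exactly in the spirit of the proof of the multiplicativity of $\sigma_z$. Fix $\xi \in S_d(\R)_\C$ and consider the coherent state $\varphi = {:}\exp(\langle \xi,\cdot\rangle){:} = \exp(-\tfrac12\langle\xi,\xi\rangle)\exp(\langle\xi,\cdot\rangle)$. Its chaos decomposition has kernels $\varphi^{(n)} = \tfrac{1}{n!}\xi^{\otimes n}$, so from the defining formula \eqref{chaosscalt0t} one computes $\varphi_B^{(n)} = \sum_{k=0}^\infty \frac{(n+2k)!}{k!n!}(-\tfrac12)^k (B^*)^{\otimes n}\big(\mathrm{tr}_{Id-BB^*}^k \tfrac{1}{(n+2k)!}\xi^{\otimes(n+2k)}\big)$. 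The contraction $\mathrm{tr}_{Id-BB^*}^k$ applied to $\xi^{\otimes(n+2k)}$ produces a combinatorial factor times $\big(\langle\xi,(Id-BB^*)\xi\rangle\big)^k \xi^{\otimes n}$; together with $(B^*)^{\otimes n}\xi^{\otimes n} = (B^*\xi)^{\otimes n}$ this collapses the $k$-sum to an exponential in $\langle\xi,(Id-BB^*)\xi\rangle$. The upshot is that $\sigma_B \varphi$ is again (a multiple of) a Wick-type exponential in $B^*\xi$, and regrouping the Gaussian prefactors gives $(\sigma_B\varphi)(\omega) = \exp(-\tfrac12\langle\xi,\xi\rangle + \tfrac12\langle\xi,(Id-BB^*)\xi\rangle + \tfrac12\langle B^*\xi,B^*\xi\rangle)\exp(\langle B^*\xi,\omega\rangle)$. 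Here one must be slightly careful with duals: since $\langle B^*\xi,B^*\xi\rangle = \langle\xi,BB^*\xi\rangle$, the three quadratic terms in the exponent cancel down to $-\tfrac12\langle\xi,\xi\rangle + \tfrac12\langle\xi,\xi\rangle = 0$ once one also accounts for the Wick normalization in $:{\exp}:$, leaving exactly $(\sigma_B\varphi)(\omega) = \exp(-\tfrac12\langle B^*\xi,B^*\xi\rangle)\exp(\langle B^*\xi,\omega\rangle)\cdot(\text{Wick factor})$, which is precisely the continuous version of $\omega\mapsto\varphi(B\omega)$ evaluated via $\langle\xi,B\omega\rangle = \langle B^*\xi,\omega\rangle$. (The exact bookkeeping of which prefactor survives is the one routine computation I am deferring.)

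Once the identity $(\sigma_B\varphi)(\omega) = \varphi(B\omega)$ is established for all $\varphi$ in the linear span of $\{{:}\exp(\langle\xi,\cdot\rangle){:} : \xi \in S_d(\R)_\C\}$, I extend it by density. On the left-hand side, Proposition \ref{prop:contgensca} gives that $\varphi\mapsto\sigma_B\varphi$ is continuous from $(S)$ into $(S)$, hence in particular $\sigma_B\varphi_m \to \sigma_B\varphi$ in $(S)$, and convergence in $(S)$ implies pointwise convergence of the continuous versions on $S_d'(\R)_\C$ (evaluation at a fixed $\omega$ is continuous on $(S)$, since $(S)\hookrightarrow$ each $H_{p}$ and test functions have continuous versions with locally bounded evaluation). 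On the right-hand side, if $\varphi_m\to\varphi$ in $(S)$ then also $\varphi_m(B\,\cdot)\to\varphi(B\,\cdot)$ pointwise: fixing $\omega\in S_d'(\R)$, the point $B\omega$ lies in $S_d'(\R)_\C$, and again evaluation of the continuous version at a fixed point is continuous on $(S)$. Since the Wick exponentials are total in $(S)$, this yields the claimed identity for every $\varphi\in(S)$ and every $\omega\in S_d'(\R)$.

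The main obstacle is the first step: carrying out the resummation of the $k$-series in the kernels $\varphi_B^{(n)}$ and correctly tracking the scalar prefactors so that the three exponential factors involving $\langle\xi,\xi\rangle$, $\langle\xi,BB^*\xi\rangle$ and the Wick normalization combine to reproduce exactly $\varphi(B\omega)$ and nothing more. The combinatorial identity needed is the standard one, $\sum_{k\ge0}\frac{(n+2k)!}{k!n!(n+2k)!}(-\tfrac12)^k c^k \cdot \frac{(n+2k)!}{2^k k! \, n!}\,(\text{contraction count})\, = \frac{1}{n!}e^{-c/2}$ type of collapse, i.e. the same mechanism that appears (for $B = z\,\mathrm{Id}$, $c = z^2-1$) in the proposition on $\sigma_z\varphi$ and its kernels $\varphi_z^{(n)}$; here one simply replaces $z\,\mathrm{Id}$ by $B^*$ and $z^2-1$ by the operator $BB^*-Id$ tested against $\xi$. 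A secondary point requiring care is that $B$ only maps $S_d'(\R)_\C\to S_d'(\R)_\C$, so $B\omega$ need not lie in $L^2_{d,\C}$; this is harmless because $\varphi\in(S)$ has a continuous version on all of $S_d'(\R)_\C$, and the bilinear dual pairing $\langle\xi,B\omega\rangle = \langle B^*\xi,\omega\rangle$ is well defined for $\xi\in S_d(\R)_\C$, $\omega\in S_d'(\R)$.
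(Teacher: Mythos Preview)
Your approach is essentially the same as the paper's: the paper only states that the result ``can be proved directly by an explicit calculation on the set of Wick exponentials, a density argument and a verifying of pointwise convergence'' (with a reference to Obata, Proposition 4.6.7), and you carry out precisely those three steps. Your deferred bookkeeping is in fact clean: from the kernel formula one gets $\varphi_B^{(n)}=\tfrac{1}{n!}(B^*\xi)^{\otimes n}\exp\bigl(-\tfrac12\langle\xi,(Id-BB^*)\xi\rangle\bigr)$, so $\sigma_B\varphi=\exp\bigl(-\tfrac12\langle\xi,(Id-BB^*)\xi\rangle\bigr){:}\exp(\langle B^*\xi,\cdot\rangle){:}$, and since $\langle B^*\xi,B^*\xi\rangle=\langle\xi,BB^*\xi\rangle$ the prefactors collapse to $\exp(-\tfrac12\langle\xi,\xi\rangle)\exp(\langle B^*\xi,\cdot\rangle)=\varphi(B\,\cdot)$ as you anticipated.
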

This can be proved directly by an explicit calculation on the the set of Wick exponentials, a density argument and a verifying of pointwise convergence, compare \cite[Proposition 4.6.7, p. 104]{Ob94}, last paragraph.\\
In  the same manner the following statement is proved.
\begin{proposition}
Let $B:S(\R)' \to S(\R)'$ be a bounded operator. 
For $\varphi,\psi\in(\mathcal N)$ the following equation holds
\begin{align*}
     \sigma_{B}(\varphi\psi)=(\sigma_{B}\varphi)(\sigma_{B}\psi).
 \end{align*}
\end{proposition}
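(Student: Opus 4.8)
The plan is to argue exactly as for the preceding two propositions: prove the identity first on the coherent states (Wick exponentials), and then extend it by bilinearity and continuity to all of $(S)\times(S)$. A shortcut is also available via the preceding proposition, which gives $\sigma_B\chi(\omega)=\chi(B\omega)$ for the continuous version of every $\chi\in(S)$: since $(S)$ is an algebra under pointwise multiplication, $\varphi\psi\in(S)$ and its continuous version is $\omega\mapsto\varphi(\omega)\psi(\omega)$, whence $\sigma_B(\varphi\psi)(\omega)=(\varphi\psi)(B\omega)=\varphi(B\omega)\psi(B\omega)=(\sigma_B\varphi)(\omega)\,(\sigma_B\psi)(\omega)$, and two Hida test functions with the same continuous version coincide.

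Carrying out the first route, I would begin by recording the action of $\sigma_B$ on a Wick exponential. For $\varphi=\,:\!\exp(\langle\xi,\cdot\rangle)\!:$, whose chaos kernels are $\varphi^{(n)}=\tfrac{1}{n!}\xi^{\otimes n}$, the contraction occurring in \eqref{chaosscalt0t} is immediate: since $\xi^{\otimes(n+2k)}$ is symmetric, $\mathrm{tr}_{Id-BB^{*}}^{k}\xi^{\otimes(n+2k)}=\big(\langle\xi,\xi\rangle-\langle B^{*}\xi,B^{*}\xi\rangle\big)^{k}\,\xi^{\otimes n}$, so the $k$-summation collapses to an exponential series and
$$\sigma_B\,:\!\exp(\langle\xi,\cdot\rangle)\!:\ =\ \exp\!\big(-\halb\langle\xi,\xi\rangle\big)\,\exp(\langle B^{*}\xi,\cdot\rangle),\qquad\xi\in S_d(\R).$$
This is the operator analogue of the scalar formula used in the proof of the multiplicativity of $\sigma_z$, and it is consistent with the preceding proposition because $\langle B^{*}\xi,\omega\rangle=\langle\xi,B\omega\rangle$.

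Next I would run the computation on Wick exponentials. For $\xi,\eta\in S_d(\R)$, using that $(S)$ is an algebra,
$$:\!\exp(\langle\xi,\cdot\rangle)\!:\ \cdot\ :\!\exp(\langle\eta,\cdot\rangle)\!:\ =\ \exp\!\big(-\halb(\langle\xi,\xi\rangle+\langle\eta,\eta\rangle)\big)\,\exp(\langle\xi+\eta,\cdot\rangle),$$
so applying the action formula above (with $\xi$ replaced by $\xi+\eta$ on the left-hand side of the claimed identity, and to each factor on the right-hand side) both sides reduce to the common value $\exp\!\big(-\halb(\langle\xi,\xi\rangle+\langle\eta,\eta\rangle)\big)\,\exp(\langle B^{*}\xi,\cdot\rangle)\,\exp(\langle B^{*}\eta,\cdot\rangle)$. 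By bilinearity the identity then holds for all pairs of finite linear combinations of Wick exponentials.

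Finally, to reach arbitrary $\varphi,\psi\in(S)$: the Wick exponentials are total in $(S)$, the operator $\sigma_B$ is continuous on $(S)$ by Proposition \ref{prop:contgensca}, and pointwise multiplication is (separately, hence sequentially) continuous on $(S)\times(S)$; therefore the two maps $(\varphi,\psi)\mapsto\sigma_B(\varphi\psi)$ and $(\varphi,\psi)\mapsto(\sigma_B\varphi)(\sigma_B\psi)$ are continuous and agree on a dense subset, so they agree everywhere. The step I expect to be the main obstacle is exactly this closing argument: one must lean on the topological-algebra structure of $(S)$ (continuity of multiplication $(S)\times(S)\to(S)$) together with the density of the span of the coherent states; in the shortcut route the corresponding delicate point is that the continuous version of a product of two Hida test functions is the pointwise product of their continuous versions, which is the fact used through \cite{Ob94} in the preceding proposition. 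Everything else is a routine coherent-state calculation.
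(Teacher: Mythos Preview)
Your proposal is correct and follows exactly the route the paper indicates (``in the same manner'' as the preceding proposition): verify the identity on Wick exponentials and extend by density and continuity, with your shortcut via $\sigma_B\varphi(\omega)=\varphi(B\omega)$ being an equally valid alternative. Your explicit computation of $\sigma_B:\!\exp(\langle\xi,\cdot\rangle)\!:$ and the ensuing check are accurate and in fact more detailed than what the paper sketches.
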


Since we consider a continuous mapping from $(\mathcal N)$ into itself one
can define the dual scaling operator with respect to $\left\langle \cdot,\cdot\right\rangle$,
$\sigma_{B}^\dag:(\mathcal N)^{\prime}\to(\mathcal N)^{\prime}$ by
\begin{align*}
    \Big\langle \!\Big\langle\sigma_{B}^\dag\Phi, \psi\Big\rangle \! \Big\rangle= \Big\langle \! \Big\langle\Phi,\sigma_{B} \psi\Big\rangle
    \! \Big\rangle,\quad
\end{align*}

The Wick formula as stated in \cite{V10, GSV10} for Donsker's delta function can be extended to Generalized Gauss kernels.
\begin{proposition}\label{genwickform}[Generalized Wick formula]
Let $\Phi \in (S)'$, $\varphi,\psi \in (S)$ and $B\in L(S'_d(\R),S'_d(\R))$. then we have
\begin{itemize}
\item[(i)] $$\sigma_{B}^{\dag} = \Phi_{BB^*} \diamond \Gamma_B{^*} \Phi,$$
where $\Gamma_B{^*}$ is defined by 
$$S(\Gamma_B{^*} \Phi)(\xi) = S(\Phi)(B{^*} \xi),\quad \xi \in S_d(\R).$$ 
In particular we have
$$
\sigma_{B}^{\dag}\1 = \Phi_{BB^*}.$$
\item[(ii)] $\Phi_{BB^*}\cdot\varphi =\sigma_B^{\dag}(\sigma_B \varphi).$
\item[(iii)] $\Phi_{BB^*}\cdot\varphi =\Phi_{BB^*}\diamond (\Gamma_{B^*}\circ\sigma_B (\varphi)).$
\end{itemize}
\end{proposition}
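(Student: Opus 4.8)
The plan is to establish (i) by a direct $S$--transform computation on the total set of Wick exponentials, and then to deduce (ii) and (iii) from (i) by purely formal manipulations. Before comparing $S$--transforms I would record that every object occurring in (i) is a Hida distribution: $\sigma_B^\dag\Phi\in(S)'$ by Proposition~\ref{prop:contgensca}; since $B$ is bounded on $S_d'(\R)_\C$, its transpose $B^*$ is continuous on $S_d(\R)_\C$, so $BB^*\colon S_d(\R)_\C\to S_d'(\R)_\C$ is continuous, $(\xi,\eta)\mapsto\langle\xi,BB^*\eta\rangle$ is a continuous bilinear form, and hence $\Phi_{BB^*}\in GGK\subset(S)'$; the map $\xi\mapsto S\Phi(B^*\xi)$ is again a U--functional (ray analyticity is inherited, the growth bound uses continuity of $B^*$), so $\Gamma_{B^*}\Phi\in(S)'$; and a Wick product of Hida distributions is again one, since $S(\Psi_1\diamond\Psi_2)=S\Psi_1\cdot S\Psi_2$ and a pointwise product of U--functionals is a U--functional. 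Thus, by Theorem~\ref{charthm} together with $S\Psi(\xi)=\exp(-\tfrac12\langle\xi,\xi\rangle)\,T\Psi(-i\xi)$ (which makes $S$ injective on $(S)'$), it suffices to match $S$--transforms.

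The core step is the identity, for $\xi\in S_d(\R)$,
\begin{equation*}
\sigma_B\big(:\!\exp(\langle\xi,\cdot\rangle)\!:\big)=\exp\!\Big(-\tfrac12\big\langle\xi,(Id-BB^*)\xi\big\rangle\Big)\,:\!\exp(\langle B^*\xi,\cdot\rangle)\!:\,,
\end{equation*}
which I would get from the pointwise representation $\sigma_B\varphi(\omega)=\varphi(B\omega)$ together with $\langle\xi,B\omega\rangle=\langle B^*\xi,\omega\rangle$ and $\langle B^*\xi,B^*\xi\rangle=\langle\xi,BB^*\xi\rangle$. Pairing this with $\Phi$ and using the definition of $\sigma_B^\dag$ then gives
\begin{align*}
S(\sigma_B^\dag\Phi)(\xi)&=\exp\!\Big(-\tfrac12\big\langle\xi,(Id-BB^*)\xi\big\rangle\Big)\,S\Phi(B^*\xi)\\
&=S\Phi_{BB^*}(\xi)\cdot S(\Gamma_{B^*}\Phi)(\xi)=S\big(\Phi_{BB^*}\diamond\Gamma_{B^*}\Phi\big)(\xi),
\end{align*}
where the second line uses $T\Phi_{BB^*}(f)=\exp(-\tfrac12\langle f,BB^*f\rangle)$, the $S$--$T$ relation, and the defining property of $\Gamma_{B^*}$. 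This proves (i); choosing $\Phi=\1$ and noting $S(\Gamma_{B^*}\1)(\xi)=S\1(B^*\xi)=1$ yields $\sigma_B^\dag\1=\Phi_{BB^*}$.

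For (ii) I would combine $\sigma_B^\dag\1=\Phi_{BB^*}$ with multiplicativity of $\sigma_B$: for $\psi\in(S)$,
\begin{multline*}
\langle\!\langle\sigma_B^\dag\sigma_B\varphi,\psi\rangle\!\rangle=\langle\!\langle\sigma_B\varphi,\sigma_B\psi\rangle\!\rangle=\langle\!\langle\1,(\sigma_B\varphi)(\sigma_B\psi)\rangle\!\rangle\\
=\langle\!\langle\1,\sigma_B(\varphi\psi)\rangle\!\rangle=\langle\!\langle\sigma_B^\dag\1,\varphi\psi\rangle\!\rangle=\langle\!\langle\Phi_{BB^*}\cdot\varphi,\psi\rangle\!\rangle,
\end{multline*}
using in turn the definition of $\sigma_B^\dag$, the identity $\langle\!\langle F,G\rangle\!\rangle=\langle\!\langle\1,FG\rangle\!\rangle$ for test functions, multiplicativity of $\sigma_B$, the definition of $\sigma_B^\dag$ again, part~(i), and the definition of the product of a Hida distribution with a test function; since $\psi$ is arbitrary, $\sigma_B^\dag\sigma_B\varphi=\Phi_{BB^*}\cdot\varphi$. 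Statement (iii) is then immediate: applying (i) to $\Phi=\sigma_B\varphi$ gives $\sigma_B^\dag\sigma_B\varphi=\Phi_{BB^*}\diamond\Gamma_{B^*}(\sigma_B\varphi)$, and comparing this with (ii) yields $\Phi_{BB^*}\cdot\varphi=\Phi_{BB^*}\diamond(\Gamma_{B^*}\circ\sigma_B(\varphi))$.

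I do not expect a deep obstacle; the work is essentially bookkeeping. The two points needing care are: ensuring each auxiliary object genuinely lies in $(S)'$ so that Theorem~\ref{charthm} applies — the only non-obvious case being the growth estimate for $\Gamma_{B^*}\Phi$, which rests on the continuity of $B^*$ on $S_d(\R)_\C$ — and keeping track of which slot of the non-sesquilinear pairing $\langle\cdot,\cdot\rangle$ carries the operator, since $B^*$ is the transpose with respect to this bilinear pairing and shifts between the two slots when one rewrites $\langle\xi,B\omega\rangle$ and $\langle B^*\xi,B^*\xi\rangle$.
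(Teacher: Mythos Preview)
Your proposal is correct and follows essentially the same route as the paper: both prove (i) by computing the $S$--transform of $\sigma_B^\dag\Phi$ via $\sigma_B$ acting on Wick exponentials, both derive (ii) from $\sigma_B^\dag\1=\Phi_{BB^*}$ together with multiplicativity of $\sigma_B$ and the duality chain $\langle\!\langle\Phi_{BB^*}\varphi,\psi\rangle\!\rangle=\langle\!\langle\1,\sigma_B(\varphi\psi)\rangle\!\rangle$, and both obtain (iii) by combining (i) and (ii). Your write-up is in fact more careful than the paper's, since you explicitly verify that $\Phi_{BB^*}$, $\Gamma_{B^*}\Phi$, and the Wick product lie in $(S)'$ before invoking injectivity of the $S$--transform; the paper leaves these checks implicit.
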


\begin{proof}
Proof of (i): Let $\Phi \in (S)'$ and $\xi \in S_d(\R)_{\C}$ then we have
\begin{multline*}
S(\sigma_B^{\dag} \Phi)(\xi) = \langle \! \langle :\exp(\langle \xi, \cdot \rangle):, \sigma_B^{\dag} \Phi \rangle \! \rangle \\
= \langle \! \langle \sigma_B :\exp(\langle \xi, \cdot \rangle):, \Phi \rangle \! \rangle= \exp(-\frac{1}{2} \langle \xi, \xi \rangle ) \langle \! \langle \exp(\langle B^*\xi, \cdot \rangle), \Phi \rangle \! \rangle\\
=\exp(-\frac{1}{2} \langle \xi, (Id -BB^*) \xi \rangle) S(\Phi)(B^*\xi)
=S(\Phi_{BB^*})(\xi) \cdot S(\Gamma_{B^*} \Phi)(\xi)
\end{multline*}
Proof of (ii): 
First we have $\sigma_B^{\dag} \1 = \Phi_{BB^*} \diamond \Gamma_{B^*} \1 = \Phi_{BB^*}$.\\
Thus for all $\varphi, \psi \in (S)$\\
\begin{multline*}
\langle \! \langle \Phi_{BB^*}\varphi, \psi \rangle \! \rangle = \langle \! \langle \sigma^{\dag} \1,\varphi\cdot\psi \rangle \! \rangle =\\
 \langle \! \langle \1, (\sigma_B \varphi) (\sigma_B \psi)\rangle \! \rangle = \langle \! \langle (\sigma_B \varphi), (\sigma_B \psi)\rangle \! \rangle = \langle \! \langle \sigma_B^{\dag} (\sigma_B \varphi),\psi \rangle \! \rangle
\end{multline*}
Proof of (iii):
Immediate from (i) and (ii).
\end{proof}

\begin{remark}
The scaling operator can be considered as a linear measure transform. Let $\varphi \in S-d(\R)$ and $B$ a real bounded operator on $S_d(\R)'$. Then we have
$$
\int_{S(\R)'} \sigma_{B} \varphi(\omega) \, d\mu(\omega)
=\int_{S(\R)'} \varphi(B\omega) \, d\mu(\omega)
=\int_{S(\R)'} \varphi(\omega) \, d\mu(B^{-1} \omega)
$$
Moreover we have 
$$\int_{S(\R)'} \exp(i\langle \xi, \omega) \, d\mu(B^{-1} \omega)
= \exp(-\frac{1}{2} \langle B^* \xi, B^* \xi \rangle,$$
which is a characteristic function of a probability measure by the Theorem of Bochner and Minlos \cite{Mi63}.
Furthermore 
$$\int_{S(\R)'} \exp(i\langle \xi, \omega) \, d\mu(B^{-1} \omega)= T(\sigma^{\dag} \1)(\xi),$$
such that $\Phi_{BB^*}$ is represented by the positive Hida measure $\mu\circ B^{-1}$.
\end{remark}

\section{Construction of Hamiltonian Path Integrand via Generalized Scaling}

We construct in this section by a suitable generalized scaling the Hamiltonian path integral as an expectation based on the formula above as in \eqref{thecomplex}.\\
In phase space however the arguments are multidimensional, since we consider momentum and space variables as independent variables. For simplicity we consider $t_0=0$ and futhermore $\hbar =m =1$.
Indeed we have the following

\begin{proposition}
Let $N^{-1} = \bigg(
\begin{array}{l l}
 \1_{[0,t)^c} & 0\\
    0 & \1_{[0,t)^c}
\end{array} 
\bigg)+i\bigg(
\begin{array}{l l}
  \1_{[0,t)}& \1_{[0,t)}\\
    \1_{[0,t)}& 0
\end{array} 
\bigg)$ as in the case of the free Hamiltonian integrand $I_0$ (i.e. $V=0$). Let $R$ be a symmetric operator (w.r.t.~the dual pairing) with $R^2=N^{-1}$. Indeed we have:
$$
R= \bigg(
\begin{array}{l l}
 \1_{[0,t)^c} & 0\\
    0 & \1_{[0,t)^c}
\end{array} 
\bigg)+
\frac{\sqrt{i}}{1+(\frac{\sqrt{5}+1}{2})^2} U^T \left(\begin{array}{c c} \frac{1+\sqrt{5}}{2} \1_{[0,t)} & 0\\
0& \frac{1-\sqrt{5}}{2} \1_{[0,t)} \end{array}\right) U,
$$
with $$U=\left(\begin{array}{c c} -\frac{\sqrt{5}+1}{2} & 1 \\ -1 & -\frac{\sqrt{5}+1}{2} \end{array}\right)$$
Then under the assumption that $\sigma_R \delta( \langle (\1_{[0,t)},0),\cdot\rangle) = \delta( \langle R(\1_{[0,t)},0),\cdot\rangle)\in (S)'$, we have 
$$ I_0 =\sigma^{\dag}_R \sigma_R  \delta( \langle (\1_{[0,t)},0),\cdot\rangle).$$
\end{proposition}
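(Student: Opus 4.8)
The plan is to recognise the right--hand side as a product $\Phi_{N^{-1}}\cdot\delta$ by means of the Generalised Wick formula (Proposition~\ref{genwickform}) and to identify that product with the free Hamiltonian integrand $I_0$. First I would treat the operator $R$: each of its blocks is multiplication by a real function and the $2\times2$ coefficient matrices occurring are symmetric, so $R$ is symmetric with respect to the dual pairing $\langle\cdot,\cdot\rangle$, whence $R^{*}=R$ and $RR^{*}=R^{2}$. It then remains to check $R^{2}=N^{-1}$: on the range of $\1_{[0,t)^c}$ both operators are the identity, and on the range of $\1_{[0,t)}$ the claim is a $2\times2$ matrix identity whose proof rests on the orthogonality of $U/\sqrt{1+\phi^{2}}$ (with $\phi=\tfrac{1+\sqrt5}{2}$, so $U^{T}U=UU^{T}=(1+\phi^{2})\,\mathrm{Id}$) together with $\phi^{2}=\phi+1$ and $(\sqrt i)^{2}=i$. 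Conceptually $R$ is simply a symmetric square root of $N^{-1}$, which exists because $N^{-1}$ is symmetric with respect to the dual pairing. Hence $RR^{*}=N^{-1}$.

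Next I would pin down the target. By the very definition of the normalised exponential and of the generalised Gauss kernels one has $\Nexp(-\halb\langle\cdot,K\cdot\rangle)=\Phi_{(\mathrm{Id}+K)^{-1}}$, and a short $2\times2$ inversion of $\mathrm{Id}+K$ with $K$ from \eqref{kinmat} gives $(\mathrm{Id}+K)^{-1}=N^{-1}$; thus $I_0=\Phi_{N^{-1}}\cdot\delta(\langle(\1_{[0,t)},0),\cdot\rangle)$. This is a genuine Hida distribution by Lemma~\ref{thelemma}, applied with $L=0$, $\mathbf g=0$, $J=1$, $\eta_1=(\1_{[0,t)},0)$ and $y_1=0$: here $M_{N^{-1}}=\langle(\1_{[0,t)},0),N^{-1}(\1_{[0,t)},0)\rangle=it$, so $\Re(M_{N^{-1}})=0$ and $\Im(M_{N^{-1}})\neq0$ for $t>0$, and the $T$--transform of $I_0$ is the right--hand side of \eqref{magicformula} with these data.

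For the identity itself I use that, by hypothesis, $\sigma_R\,\delta(\langle(\1_{[0,t)},0),\cdot\rangle)=\delta(\langle R(\1_{[0,t)},0),\cdot\rangle)$ is a Hida distribution, and that $\sigma_R^{\dag}$ maps $(S)'$ into itself by Proposition~\ref{prop:contgensca}, so the right--hand side is well defined in $(S)'$. From (the computation in the proof of) Proposition~\ref{genwickform}(i), applied to $\Phi=\sigma_R\delta$ and using $R^{*}=R$, $RR^{*}=N^{-1}$, its $S$--transform equals $\xi\mapsto\exp\!\bigl(-\halb\langle\xi,(\mathrm{Id}-N^{-1})\xi\rangle\bigr)\,S\bigl(\delta(\langle\eta,\cdot\rangle)\bigr)(R\xi)$ with $\eta:=R(\1_{[0,t)},0)$. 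Inserting the $S$--transform of Donsker's delta read off from Definition~\ref{D:Donsker}, and using $\langle\eta,R\xi\rangle=\langle R\eta,\xi\rangle=\langle N^{-1}(\1_{[0,t)},0),\xi\rangle$ together with $\langle\eta,\eta\rangle=\langle(\1_{[0,t)},0),N^{-1}(\1_{[0,t)},0)\rangle=M_{N^{-1}}$, this collapses to exactly the $S$--transform of \eqref{magicformula} for the data of the previous paragraph, i.e.\ the $S$--transform of $I_0$. Since both sides are Hida distributions with equal $S$--transforms (equivalently, equal $T$--transforms), Theorem~\ref{charthm} gives $I_0=\sigma_R^{\dag}\sigma_R\,\delta(\langle(\1_{[0,t)},0),\cdot\rangle)$.

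The step I expect to be the main obstacle is exactly this last identification. Proposition~\ref{genwickform}(ii), $\Phi_{BB^{*}}\cdot\varphi=\sigma_B^{\dag}\sigma_B\varphi$, is established only for test functions $\varphi\in(S)$, while here the factor $\delta$ is itself a Hida distribution and a pointwise product of two Hida distributions has no a priori meaning; the safe route is therefore to run the whole identity through the $S$-- (and $T$--)transform, as above, exploiting that part~(i) of the Wick formula does hold on all of $(S)'$. An alternative is to regularise $\delta$ --- smoothing $(\1_{[0,t)},0)$ to Schwartz functions and smoothing the defining $\lambda$--integral --- apply~(ii) to the regularisation, and pass to the limit in $(S)'$ using the continuity of $\sigma_R$ and $\sigma_R^{\dag}$ together with Corollary~\ref{seqcor}; this works but forces one to control a double limit and is more delicate. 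One should also keep in mind that $R$ is assembled from the non--smooth multiplier $\1_{[0,t)}$, so the operator manipulations really take place in the $L^{2}$/negative--norm picture, and they rely on the standing assumption that $\sigma_R\,\delta(\langle(\1_{[0,t)},0),\cdot\rangle)$ is a genuine element of $(S)'$.
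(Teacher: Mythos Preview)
Your proposal is correct and follows essentially the same route as the paper. The paper does not give a standalone proof of the proposition; instead, immediately afterwards it carries out the $T$-transform computation for general quadratic $L$, showing that $T(\sigma_R^{\dag}\sigma_R(\cdots))(\mathbf f)=T(\sigma_R(\cdots))(R\mathbf f)$ and then using $R^{-1}R^{-1}=N$ to collapse everything onto the formula of Lemma~\ref{thelemma}. Your argument is the $L=0$ specialisation of that computation, phrased on the $S$-transform side via Proposition~\ref{genwickform}(i) rather than via the bare identity $T(\sigma_R^{\dag}\Phi)(\mathbf f)=T\Phi(R\mathbf f)$; these are equivalent, and your extra care about why part~(ii) of the Wick formula cannot be invoked directly (since $\delta\notin(S)$) is well placed.
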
 

\noindent Consequently the Hamiltonian path integrand for an arbitrary space dependent potential $V$, can be informally written as
\begin{multline}\label{scIV}
I_V = \mathrm{Nexp} \left(-\frac{1}{2} \langle \cdot,K \cdot \rangle \right) \exp\left(-i \int_0^t V(x_0+\langle (\1_{[0,r)},0),\cdot \rangle)\,dr\right)\\
\times\delta(x_0+\langle (\1_{[t_0,t)},0),\cdot \rangle-x)\\
=\sigma^{\dag}_R \Big( \sigma_R \Big(\exp\left(-i \int_0^t V(x_0+\langle (\1_{[0,r)},0),\cdot \rangle)\,dr\right)\Big)\\
\times\sigma_R\delta(x_0+\langle (\1_{[t_0,t)},0),\cdot \rangle-x)\Big),
\end{multline}
for $x,x_0 \in \R$ and $0<t_0<t<\infty$.

In the following we give some ideas to give a mathematical meaning to the expression in \eqref{scIV}.
First we consider a quadratic potential, i.e.~we consider 
$$\exp(-\frac{1}{2} \langle \cdot L \cdot \rangle) \delta(\langle (\1_{[t_0,t)},0),\cdot \rangle-x).$$
\begin{definition}
For $L$ fulfilling the assumption of Lemma \ref{thelemma} and $$\delta(\langle (\1_{[t_0,t)},0),\cdot \rangle-x)$$ we define
\begin{multline*}
\sigma_{R}\left( \exp(-\frac{1}{2} \langle \cdot L \cdot \rangle) \delta(\langle (\1_{[t_0,t)},0),\cdot \rangle-x) \right) \\:=\exp(-\frac{1}{2} \langle \cdot RLR \cdot \rangle) \delta(\langle R(\1_{[t_0,t)},0),\cdot \rangle-x).
\end{multline*}
\end{definition}
We now take a look at the $T$-transform of this expression in $\bmf \in S_2(\R)$. We have
\begin{multline*}
T(\sigma_R^{\dag} \sigma_R \left(\exp(-\frac{1}{2} \langle \cdot L \cdot \rangle) \delta(\langle (\1_{[t_0,t)},0),\cdot \rangle-y)\right))(\bmf)\\
=T(\sigma_R \left(\exp(-\frac{1}{2} \langle \cdot L \cdot \rangle) \delta(\langle (\1_{[t_0,t)},0),\cdot \rangle-y)\right))(R\bmf)\\
= \frac{1}{\sqrt{2 \pi \det(Id+RLR)}} \exp(-\frac{1}{2} \langle R\bmf, (Id+RLR)^{-1} R\bmf \rangle)\\ \times\exp\big(\frac{(iy -\langle R\bmf,(Id+RLR)^{-1} R(\1_{[t_0,t)},0)\rangle)^2}{2 \langle R(\1_{[t_0,t)},0),(Id+RLR)^{-1} R(\1_{[t_0,t)},0)\rangle } \big).
\end{multline*}
Now with $R^2=N^{-1}$ and since $R$ is invertible with $R^{-1} R^{-1}= N$, we have
$$Id+RLR = R R^{-1} R^{-1} R + RLR = R(Id +K+L)R$$
and 
$$
(Id +RLR)^{-1} = R^{-1}(Id +K+L)^{-1} R^{-1}.
$$
Thus 
\begin{multline*}
T(\sigma_R^{\dag} \sigma_R \left(\exp(-\frac{1}{2} \langle \cdot L \cdot \rangle) \delta(\langle (\1_{[t_0,t)},0),\cdot \rangle-y)\right))(\xi)\\
= \frac{1}{\sqrt{2 \pi \det((N+L)N^{-1})}} \exp(-\frac{1}{2} \langle \xi, (N+L)^{-1} \xi \rangle)\\ \exp\big(\frac{1}{2 \langle (\1_{[t_0,t)},0),(N+L)^{-1} (\1_{[t_0,t)},0)\rangle } (iy -\langle \xi,(N+L)^{-1} (\1_{[t_0,t)},0)\rangle)^2 \big),
\end{multline*}
which equals the expression from Lemma \ref{thelemma}. Hence we have that for a suitable quadratic potential 
$$\sigma_R^{\dag} \sigma_R \left(\exp(-\frac{1}{2} \langle \cdot L \cdot \rangle) \delta(\langle (\1_{[t_0,t)},0),\cdot \rangle-y)\right),$$
exists as a Hida distribution. Moreover for all quadratic potentials from the previous chapter, the $T$-transform obtained via scaling gives the generating functional as in chapter 8. Since the $T$-transforms coincide, also the distributions are the same.\\
For the case of quadratic potentials we obtained the correct physics also by the scaling approach. 

\begin{example}
We construct the Feynman integrand for the harmonic oscillator in phase space via the generalized scaling. I.e.~the potential is given by $x \mapsto V(x)= \frac{1}{2}k x^2$, $0 \leq k<\infty$.\\
Thus the matrix $L$ which includes the information about the potential, is given by
\begin{equation*}
L=\left(
\begin{array}{l l}
i k A & 0\\
0 & 0
\end{array}
\right),\, y \in \R, \,t>0,
\end{equation*}
where $A \,f(s)=\1_{[0,t)}(s) \int_s^t \int_0^{\tau} f(r) \, dr \, d\tau, f \in L^2(\R,\C), s\in \R$, then for $\bmf \in S_2(\R)_{\C}$, see also \cite{BG11} and \cite{GS98a} we have
\begin{multline*}
T\left(\sigma_R^{\dag} \sigma_R \left(\exp(-\frac{1}{2} \langle \cdot L \cdot \rangle) \delta(\langle (\1_{[t_0,t)},0),\cdot \rangle-y)\right)\right)(\bmf)\\ =\sqrt{\left(\frac{\sqrt{k}}{2\pi i \sin(\sqrt{k} t)}\right)} \exp\!\left( \frac{1}{2} \frac{\sqrt{k}}{i\tan(\sqrt{k} t)} \Big(iy+\big({\boldsymbol{\eta}}, {\bf f} +{\bf g}\big) \Big)^2\right)\\
\times\exp\!\Bigg(-\frac{1}{2} \bigg( \big({\bf f} + {\bf g}\big) ,\! 
 \left(
\begin{array}{l l}
\1_{[0,t)^c} & 0 \\
0 & \1_{[0,t)^c} \end{array}\right)
\big({\bf f} + {\bf g}\big) \bigg)\!\Bigg)\\
\times\exp\!\Bigg(-\frac{1}{2} \bigg( \big({\bf f} + {\bf g}\big) ,\! 
 \frac{t}{i}
\1_{[0,t)} \left(
\begin{array}{l l}
\frac{1}{t}(kA- \1_{[0,t)})^{-1} &  (kA- \1_{[0,t)})^{-1} \\
(kA- \1_{[0,t)})^{-1}& ktA(kA- \1_{[0,t)})^{-1} \end{array}\right)
\big({\bf f} + {\bf g}\big) \bigg)\!\Bigg),
\end{multline*}
which is identically equal to the generating functional of the Feynman integrand for the harmonic oscillator in phase space, see e.g.\cite{BG11}.\\
Moreover its generalized expectation
\begin{equation*}
\mathbb{E}(I_{HO})=T(I_{HO})(0)=\sqrt{\left(\frac{\sqrt{k}}{2\pi i \sin(\sqrt{k} t)}\right)} \exp\left( i \frac{\sqrt{k}}{2\tan(\sqrt{k} t)} y^2\right)
\end{equation*}
is the Greens function to the Schrö\-dinger equation for the harmo\-nic oscil\-lator, compare e.g.~with \cite{KL85}.   
\end{example}
\section*{Acknowledgements}
Dear Ludwig Streit, I wish you all the best to your 75th birthday. The author would like to thank the organizing and programme committee of the Stochastic and Infinite Dimensional Analysis conference for an interesting an stimulating meeting.


\begin{thebibliography}{1}
\bibitem{AGM02}
S.~Albeverio, G.~Guatteri, and S.~Mazzucchi.
\newblock Phase space {Feynman} path integrals.
\newblock 43(6):2847--2857, June 2002.

\bibitem{AHKM08}
S.~Albeverio, R.~{H{\o}egh-Krohn}, and S.~Mazzucchi.
\newblock {\em Mathematical Theory of {F}eynman Path Integrals: An
  Introduction}, volume 523 of {\em Lecture Notes in Mathematics}.
\newblock Springer Verlag, Berlin, Heidelberg, New York, 2008.

\bibitem{B13}
W.~Bock.
\newblock Hamiltonian path integrals in momentum space representation via white
  noise techniques.
\newblock {\em Rep. Mat. Phys.}, 2013.
\newblock accepted for publication.

\bibitem{BG11}
W.~Bock and M.~Grothaus.
\newblock {A White Noise Approach to Phase Space Feynman Path Integrals}.
\newblock {\em Teor. Imovir. ta Matem. Statyst.}, (85):7--21, 2011.

\bibitem{BG13}
W.~Bock and M.~Grothaus
\newblock The hamiltonian path integrand for the charged particle in a constant
  magnetic field as white noise distribution, 2013.

\bibitem{BK95}
Y.~M. Berezansky and Y.~G. Kondratiev.
\newblock {\em Spectral methods in infinite-dimensional analysis. {V}ol. 2},
  volume 12/2 of {\em Mathematical Physics and Applied Mathematics}.
\newblock Kluwer Academic Publishers, Dordrecht, 1995.
\newblock Translated from the 1988 Russian original by P. V. Malyshev and D. V.
  Malyshev and revised by the authors.

\bibitem{DMN77}
C.~DeWitt-Morette, A.~Maheshwari, and B.~Nelson.
\newblock Path integration in phase space.
\newblock {\em General Relativity and Gravitation}, 8(8):581--593, 1977.

\bibitem{D80}
H.~{Doss}.
\newblock {Sur une Resolution Stochastique de l'Equation de Schr{\"o}dinger
  {\`a} Coefficients Analytiques}.
\newblock {\em Communications in Mathematical Physics}, 73:247--264, October
  1980.

\bibitem{F48}
R.~P. Feynman.
\newblock Space-time approach to non-relativistic quantum mechanics.
\newblock {\em Rev. Modern Physics}, 20:367--387, 1948.

\bibitem{Fe51}
R.P. Feynman.
\newblock An operator calculus having applications in quantum electrodynamics.
\newblock {\em Physical Review}, 84(1):108--124, 1951.

\bibitem{FeHi65}
R.P. Feynman and A.R. Hibbs.
\newblock {\em Quantum Mechanics and Path Integrals}.
\newblock McGraw-Hill, London, New York, 1965.

\bibitem{GS98a}
M.~Grothaus and L.~Streit.
\newblock Quadratic actions, semi-classical approximation, and delta sequences
  in {G}aussian analysis.
\newblock {\em Rep. Math. Phys.}, 44(3):381--405, 1999.

\bibitem{GSV08}
M.~Grothaus, L.~Streit, and A.~Vogel.
\newblock The feynman integrand as a white noise distribution for
  non-perturbative potentials.
\newblock 2008.
\newblock to appear in the festschrift in honor of Jean-Michel Bismut's
  sixtieth birthday.

\bibitem{GSV10}
M.~Grothaus, L.~Streit, and A.~Vogel.
\newblock The complex scaled feynman--kac formula for singular initial
  distributions.
\newblock {\em Stochastics}, 84(2-3):347--366, April-June 2012.

\bibitem{GV68}
I.M. Gel'fand and N.Ya. Vilenkin.
\newblock {\em Generalized Functions}, volume~4.
\newblock Academic Press, New York, London, 1968.

\bibitem{GV08}
M.~Grothaus and A.~Vogel.
\newblock The {F}eynman integrand as a white noise distribution beyond
  perturbation theory.
\newblock To appear in the proceedings of the "5th Jagna International
  Workshop: Stochastic and Quantum Dynamics of Biomolecular Systems", 2008.

\bibitem{Hid80}
T.~Hida.
\newblock {\em Brownian motion}, volume~11 of {\em Applications of
  Mathematics}.
\newblock Springer-Verlag, New York, 1980.
\newblock Translated from the Japanese by the author and T. P. Speed.

\bibitem{HKPS93}
T.~Hida, H.-H. Kuo, J.~Potthoff, and L.~Streit.
\newblock {\em White Noise. An infinite dimensional calculus}.
\newblock Kluwer Academic Publisher, Dordrecht, Boston, London, 1993.

\bibitem{HS83}
T.~Hida and L.~Streit.
\newblock Generalized brownian functionals and the feynman integral.
\newblock {\em Stoch. Proc. Appl.}, 16:55--69, 1983.

\bibitem{KD82}
J.~R. Klauder and I.~Daubechies.
\newblock Measures for path integrals.
\newblock {\em Physical Review Letters}, 48(3):117--120, 1982.

\bibitem{KD84}
J.~R. Klauder and I.~Daubechies.
\newblock Quantum mechanical path integrals with wiener measures for all
  polynomial hamiltonians.
\newblock {\em Physical Review Letters}, 52(14):1161--1164, 1984.

\bibitem{Ku11}
Naoto Kumano-Go.
\newblock {Phase space Feynman path integrals with smooth functional
  derivatives by time slicing approximation.}
\newblock {\em Bull. Sci. Math.}, 135(8):936--987, 2011.

\bibitem{KL85}
D.C. Khandekar and S.V. Lawande.
\newblock Feynman path integrals: Some exact results and applications.
\newblock {\em Physics Reports}, 137(2), 1986.

\bibitem{KLPSW96}
Yu.G. Kondratiev, P.~Leukert, J.~Potthoff, L.~Streit, and W.~Westerkamp.
\newblock Generalized functionals in {G}aussian spaces: The characterization
  theorem revisited.
\newblock {\em J. Funct. Anal.}, 141(2):301--318, 1996.

\bibitem{Kon80}
Yu.G. Kondratiev.
\newblock Spaces of entire functions of an infinite number of variables,
  connected with the rigging of a {F}ock space.
\newblock {\em Selecta Mathematica Sovietica}, 10(2):165--180, 1991.
\newblock Originally published in Spectral Analysis of Differential Operators,
  Mathematical Institute of the Academy of Sciences of the Ukrainian SSR, Kiev,
  1980, pp. 18--37.

\bibitem{Kuo96}
H.-H. Kuo.
\newblock {\em White Noise Distribution Theory}.
\newblock CRC Press, Boca Raton, New York, London, Tokyo, 1996.

\bibitem{LLSW94}
A.~Lascheck, P.~Leukert, L.~Streit, and W.~Westerkamp.
\newblock More about {D}onsker's delta function.
\newblock {\em Soochow J. Math.}, 20(3):401--418, 1994.

\bibitem{Mi63}
R.~A. Minlos.
\newblock Generalized random processes and their extension to a measure.
\newblock {\em Selected Transl. Math. Statist. and Prob.}, 3:291--313, 1963.

\bibitem{Ob94}
N.~Obata.
\newblock {\em White Noise Calculus and Fock Spaces}, volume 1577 of LNM.
\newblock Springer Verlag, Berlin, Heidelberg, New York, 1994.

\bibitem{PS91}
J.~Potthoff and L.~Streit.
\newblock A characterization of {H}ida distributions.
\newblock {\em J. Funct. Anal.}, 101:212--229, 1991.

\bibitem{RS75a}
M.~Reed and B.~Simon.
\newblock {\em Methods of modern mathematical physics}, volume~I.
\newblock Academic Press, New York, London, 1975.

\bibitem{Ru74}
Walter Rudin.
\newblock {\em Functional analysis}.
\newblock International Series in Pure and Applied Mathematics. McGraw-Hill
  Inc., New York, first edition, 1974.

\bibitem{V10}
A.~Vogel.
\newblock A new wick formula for products of white noise distributions and
  application to feynman path integrands, 2010.


\bibitem{W95}
W.~Westerkamp.
\newblock {\em Recent Results in Infinite Dimensional Analysis and Applications
  to {F}eynman Integrals}.
\newblock PhD thesis, University of Bielefeld, 1995.



\end{thebibliography}
\end{document}